\documentclass[11pt,a4paper,twoside]{article}
\pdfoutput=1

\usepackage{amsmath}
\usepackage{amsfonts}
\usepackage{amsthm}
\usepackage{amssymb}
\usepackage[margin=2.5cm]{geometry}
\usepackage{microtype}
\usepackage{graphicx}
\usepackage{array}

\author{Ralph Christian Bottesch\thanks{CWI, Amsterdam, the Netherlands. Supported by ERC Consolidator Grant QPROGRESS 615307.}}
\title{Relativization and Interactive Proof Systems in Parameterized Complexity Theory}

\frenchspacing

\theoremstyle{definition}
\newtheorem{theor}{Theorem}
\theoremstyle{definition}

\theoremstyle{definition}

\theoremstyle{definition}
\newtheorem{corol}[theor]{Corollary}
\theoremstyle{definition}
\newtheorem{rem}[theor]{Remark}
\theoremstyle{definition}
\newtheorem{fact}[theor]{Fact}
\theoremstyle{definition}

\theoremstyle{definition}
\newtheorem{defin}[theor]{Definition}
\theoremstyle{definition}

\theoremstyle{definition}

\def\R{\textrm{R}}
\def\poly{\textrm{poly}}

\begin{document}
\maketitle

\begin{abstract}
We introduce some classical complexity-theoretic techniques to Parameterized Complexity. First, we study relativization for the machine models that were used by Chen, Flum, and Grohe (2005) to characterize a number of parameterized complexity classes. Here we obtain a new and non-trivial characterization of the $\mathbf{A}$-Hierarchy in terms of oracle machines, and parameterize a famous result of Baker, Gill, and Solovay (1975), by proving that, relative to specific oracles, $\mathbf{FPT}$ and $\mathbf{A[1]}$ can either coincide or differ (a similar statement holds for $\mathbf{FPT}$ and $\mathbf{W[P]}$). Second, we initiate the study of interactive proof systems in the parameterized setting, and show that every problem in the class $\mathbf{AW[SAT]}$ has a proof system with ``short'' interactions, in the sense that the number of rounds is upper-bounded in terms of the parameter value alone.
\end{abstract}

\section{Introduction}
In Parameterized Complexity Theory, the complexity of computational problems is measured not only in terms of the size of the input, $|x|$, but also in terms of a parameter $k$ which measures some additional structure of the input. This approach is justified by two observations:

1.\ that a running time of $2^k |x|$ is preferable to $|x|^k$, although both are exponential in $k$, because the former means that an algorithm can be run on large instances if $k$ is not too large;

2.\ that even problems which are believed to be hard can have algorithms that run in time $\exp(k)\poly(|x|)$, for very natural choices of the parameter.

These observations motivate the study of $\mathbf{FPT}$, the class of \emph{fixed-parameter tractable} problems (which can be solved in time $f(k)\poly(|x|)$, for some computable function $f$). To this relaxed notion of computational tractability there corresponds a matching notion of intractability.

The complexity classes capturing parameterized intractability were originally defined as closures, under suitably defined parameterized reductions, of specific problems that were conjectured to not have fpt-algorithms (see \cite{df1}, or the more recent \cite{df2}). This approach ensured that most of these ``hard'' classes contained an interesting or somewhat natural complete problem, and, in the case of $\mathbf{W}[1]$, produced a ``web of reductions'' similar to the one for $\mathbf{NP}$-complete problems in classical complexity.

However, defining complexity classes only via reductions to specific problems means that the resulting classes may not have characterizations in terms of computing machines, or, indeed, any natural characterizations except the definition.  This in turn can mean that many proof techniques from classical complexity are not usable in the parameterized setting, because they rely on different characterizations that do not apply to any one parameterized complexity class. To give an example, in the proof of  $\mathbf{IP}=\mathbf{PSPACE}$ (\cite{sham}, see also \cite{shen}), both the definition of $\mathbf{PSPACE}$ in terms of space-bounded computation, and the characterization of this class in terms of alternating polynomial-time computation are used. In the parameterized world, this equivalence between space and alternating time seems to break down \cite{cfg1}, and parameterized interactive proof systems do not appear to have been studied at all, so no similar theorem is known in this setting.

Surprisingly (given the way they were originally defined), many of the classes capturing parameterized intractability turned out to have characterizations in terms of computing machines: In three papers, Chen \cite{cfg1, cfg2, cfg3}, Flum \cite{cfg1, cfg2, cfg3}, and Grohe \cite{cfg1, cfg3} showed that certain kinds of nondeterministic random access machines (RAMs) exactly define some important parameterized classes:

-$\mathbf{W[P]}$ and $\mathbf{AW[P]}$ are characterized by RAMs that can nondeterministically \footnote{Throughout this paper, nondeterminism will mean alternating nondeterminism with a number of alternations that will be clear from the context. This should not cause any confusion, since simple nondeterminism is just $1$-alternating nondeterminism.} guess integers, but the number of guesses they can make throughout the computation is bounded by a computable function of the parameter value of the input instance. We refer to this as \emph{parameter-bounded nondeterminism} (a term used similarly in \cite{cfg1}).

-The classes of the $\mathbf{A}$-Hierarchy, as well as $\mathbf{AW[*]}$, are obtained by further restricting the (alternating) nondeterminism of the machines to \emph{tail-nondeterminism}, meaning that the machines can only make nondeterministic guesses among the last $h(k)$ steps of a computation, where $h$ is a computable function and $k$ is the parameter. 

-Finally, the classes of the $\mathbf{W}$-Hierarchy are characterized by tail-nondeterministic machines which are not allowed to access the guessed integers directly (they can make nondeterministic decisions based on them, but not use them in arithmetic operations).

The main reason why the characterizations in \cite{cfg1}, \cite{cfg2}, and \cite{cfg3} were given in terms of RAMs, rather than Turing machines (TMs), is that a TM may need to traverse the entire used portion of its tape in order to read a particular bit, so a tail-nondeterministic TM would not be able to make use of its entire memory during the nondeterministic phase of the computation. The classes $\mathbf{W[P]}$ and $\mathbf{AW[P]}$ also have characterizations in terms of TMs with restricted nondeterminism \cite{cfg1}, but we consistently use random access machines throughout this work.

The machine characterizations of some of the above-mentioned classes can be rewritten in such a way that they strongly resemble definitions of some familiar classes from classical complexity. For example, $\mathbf{A}[1]$ can be defined as the class of parameterized problems that are decided by tail-nondeterministic RAMs in \emph{fpt-time}, which at least formally looks like the definition of $\mathbf{NP}$. Similarly, $\mathbf{W[P]}$ can also be defined in a way that is similar to $\mathbf{NP}$ (using parameter-bounded nondeterminism), the levels of the $\mathbf{A}$-Hierarchy have characterizations that match the definitions of the $\Sigma$-levels of the Polynomial Hierarchy, and $\mathbf{AW[P]}$ and $\mathbf{AW[*]}$ both correspond to $\mathbf{AP}$ (the class of problems that are decidable in alternating polynomial-time). Given the similar definitions, it seems reasonable to expect that parameterized complexity classes also inherit some properties from their classical counterparts. On the other hand, replacing the machine model in a definition is a significant change, so it is by no means obvious which theorems will still hold for a parameterized version of a complexity class.

Our goal in this paper is to show that having machine-based characterizations of parameterized complexity classes opens up a largely unexplored, but possibly very fruitful, path toward understanding parameterized intractability. To that end we extend the work of Chen, Flum, and Grohe \cite{cfg1, cfg2, cfg3} in two directions: relativization and interactive proofs. The key insight is that parameterized versions of these two concepts can be defined in such a way that some important classical theorems can be recovered in this setting. The proofs of our theorems follow along the same lines as their classical counterparts, with only some technical obstacles to be overcome, but it is a remarkable fact that parameterized versions of these proofs can be made to work at all: For example, it is not a priori clear whether parameterized oracle computation can be even in principle defined in a way that makes the $\mathbf{A}$-Hierarchy have an oracle-based characterization that is similar to that of $\mathbf{PH}$. We show, among other things, that this is indeed the case, and furthermore, that the restrictions that must be placed on the access to the oracle in order to obtain this result are quite natural (at least, in the context of the machine characterization of $\mathbf{A}[1]$ from \cite{cfg3}).

\subsection{Our results}
\textbf{Parameterized relativization.} Theorems involving oracles have been given before in Parameterized Complexity, but it is almost always Turing machines that are endowed with access to an oracle (see, for example, \cite{muell}). In order to relativize the hard parameterized complexity classes for which machine characterizations are known, we define oracle RAMs with the different forms of restricted nondeterminism mentioned above. It turns out that in order for oracle access and nondeterminism to interact in a useful way, both of these features must, roughly speaking, have the same restrictions (tail-nondeterministic machines should have tail-restricted oracle access, etc.).\footnote{Placing restrictions on the access to an oracle is a fairly common practice even in classical complexity. For example, the oracle tape of a $\mathbf{LOGSPACE}$-machine is write-only, in order to allow the machine to make polynomial-sized queries while preventing it from using the tape for computations that avoid the space restriction. Another example can be found in \cite{aw08}, where, in order to prove that the statement $\mathbf{NEXP}\subset\mathbf{MIP}$ \emph{algebrizes}, the authors restrict machines that run in exponential time so that they can only make poly-sized oracle queries.} We show that these restrictions lead to a natural type of oracle access for each type of machine, by proving parameterized versions of two fundamental results from classical complexity, both for the tail-nondeterministic and the parameter-bounded version of nondeterministic RAMs.

First, we give a new characterization of the classes of the $\mathbf{A}$-Hierarchy, in terms of oracle machines (resembling the oracle characterization of the levels of the Polynomial Hierarchy (see \cite{ab}, Section 5.5)), by proving that
\begin{displaymath}
\forall t\geq 1:\mathbf{A}[1]^{O_t}=\mathbf{A}[t+1],
\end{displaymath}
but only for a \emph{specific} oracle $O_t$ that is complete for $\mathbf{A}[t]$ (Theorem \ref{ah_oracl_thm}). We also explain why tail-nondeterminism appears to be too weak to allow for this theorem to be proved for an arbitrary $\mathbf{A}[t]$-complete problem. The situation is much better when the nondeterminism is only parameter-bounded, and we have (Theorem \ref{wp_oracl_thm}) that
\begin{displaymath}
\forall t\geq 1:\mathbf{W[P]}^{\mathbf{\Sigma}^{[P]}_t}=\mathbf{\Sigma}^{[P]}_{t+1},
\end{displaymath}
where $\mathbf{\Sigma}^{[P]}_t$ ($t\geq 1$) are the $\Sigma$-levels of the analogue of the Polynomial Hierarchy for the machine model with parameter-bounded nondeterminism (so $\mathbf{\Sigma}^{[P]}_1=\mathbf{W[P]}$). We emphasize that both of these theorems seem to hold only if the oracle $\mathbf{A}[1]$- and $\mathbf{W[P]}$-machines have exactly the right restrictions placed on their oracle access, and even then, tail-nondeterminism causes a number of non-trivial technical issues (see the proof of Theorem \ref{ah_oracl_thm}).

Second, we recover a parameterized version of a well-known oracle separation result of Baker, Gill, and Solovay \cite{bakgs}, by showing (Theorem \ref{oraclesep_thm}) that there exist parameterized oracles $A$ and $B$ such that
\begin{displaymath}
\mathbf{FPT}^A=\mathbf{A}[1]^{A}\textrm{\ \ and\ \ }\mathbf{FPT}^B\neq \mathbf{A}[1]^{B}.\footnote{A similar theorem was proved much earlier by Downey and Fellows \cite{df92}, but based on a different notion of relativization (unavoidably, since the machine models for $\mathbf{A}[1]$ were only discovered a decade later \cite{cfg1}), and using heavily recursion-theoretic proof techniques.}
\end{displaymath}
It is worth noting that here the $\mathbf{FPT}$-machine may be given completely unrestricted access to the oracle $B$, whereas the $\mathbf{A}[1]$-machine only has tail-restricted access (which is the most restricted form of oracle access we consider), so in some sense this separation is stronger than expected. A similar theorem holds when replacing $\mathbf{A}[1]$ with $\mathbf{W[P]}$ (Theorem \ref{wp_oraclesep_thm}).

These results are, of course, only the first steps toward understanding relativization for parameterized complexity classes beyond $\mathbf{FPT}$. To illustrate the importance of investigating relativization in this setting, let us briefly consider the long-standing open problem of proving a parameterized version of Toda's Theorem \cite{toda}, which states that $\mathbf{PH}\subseteq \mathbf{P}^\mathbf{PP}$. It is not clear which parameterized classes would be involved in such a theorem, but, presumably, $\mathbf{P}$ would be replaced by $\mathbf{FPT}$, which can easily be described in terms of Turing machines, so it should be possible to at least state the theorem without further considerations about the type of oracle access being used. Furthermore, it could be argued that since only the larger of the two classes in the theorem statement is obtained via relativization, placing no restrictions on the access to the oracle can only make the inclusion easier to prove. However, both Toda's original proof \cite{toda} and Fortnow's simplified version of it \cite{fortnow} make heavy use of relativized versions of classes such as $\mathbf{BPP}$ and $\mathbf{PH}$, so following either one of these proofs would involve relativized versions of parameterized counterparts of such classes. Our Theorems \ref{ah_oracl_thm} and \ref{wp_oracl_thm} only deal with oracle access and alternating nondeterminism, but this already requires a careful balancing of the restrictions placed on both features. Toda's Theorem, on the other hand, involves an interplay between relativization, alternating nondeterminism, randomization, and counting complexity, so it seems unlikely that a parameterized version of it can be proved without a better understanding of parameterized relativization and its relation to other complexity-theoretic concepts.
\\\\
\textbf{Interactive proof systems for parameterized complexity classes.} The levels of the $\mathbf{A}$-Hierarchy were originally defined as fpt-closures of \emph{model checking} problems, where a relational structure $\mathcal{A}$ and a first-order formula $\phi$ without free variables are given, and the task is to decide whether $\mathcal{A}$ satisfies $\phi$. In \cite{cfg3}, model checking problems are used in a very interesting way in the proof of the machine characterization of the classes $\mathbf{A}[t]$: Specifically, a pair $(\mathcal{A},\phi)$ is used to encode the computation of a tail-nondeterministic RAM, in a way that is strongly reminiscent of how the computation of a nondeterministic TM is encoded as a quantified Boolean formula in the proof of the Cook-Levin Theorem (see \cite{ab}, Chap.\ 2). This suggests that by generalizing classical techniques that involve quantified Boolean formulas, it may be possible to apply them to parameterized complexity classes for which a model checking problem is complete. In Section \ref{ip_sec} we continue this line of thought by generalizing \emph{arithmetization} of quantified Boolean formulas (see \cite{ab}, Section 8.3) to pairs of relational structures and first-order formulas.

We also initiate the study of interactive proof systems in this setting. Using generalized arithmetization, we show that all problems in $\mathbf{AW[SAT]}$ have proof systems with a number of rounds depending only on the parameter value of the input instance (Theorem \ref{awsat_thm}). The goal (which, unfortunately, is not achieved here) is to precisely characterize either $\mathbf{AW[*]}$ or $\mathbf{AW[P]}$ in terms of IPs, as this would recover a parameterized version of the fact that $\mathbf{IP}=\mathbf{AP}$, even without a notion of space that corresponds to alternation in the parameterized setting. At the end of Section 4 we give a possible candidate for a characterization of $\mathbf{AW[*]}$.

\section{Preliminaries}
We refer to \cite{ab} and to \cite{fg}, respectively, for the necessary background in classical and Parameterized Complexity. By $\mathbb{N}$ we mean the set of non-negative integers, and by $\mathbb{N}^*$ the set of finite sequences of non-negative integers.
\subsection{Random access machines and parameterized complexity classes}

We give only a general overview of RAMs, and refer to Section 2.6 of \cite{papadim} for the details. A random access machine is specified by its \emph{program} (a finite sequence of instructions), which operates on an infinite sequence of \emph{standard registers}, $r_0,r_1,\ldots$, that contain integers. Instructions access registers either directly, by referencing their numbers, or indirectly, by taking the number of a register to be the current content of another register (in other words, the machine can access $r_{r_i}$, $i\in\mathbb{N}$, in constant time). We follow \cite{cfg1} in assuming that the registers store only non-negative integers. Except instructions that copy the contents of one register to another, a RAM also has conditional and unconditional jump instructions, as well as instructions which perform the operations addition, subtraction, and integer division by 2 (these suffice to efficiently perform all arithmetic operations on signed integers). The input of a RAM is a finite sequence of non-negative integers, each stored in a separate register, and we define the problems solved by such machines accordingly.

\begin{defin}
A \emph{parameterized problem} $Q$ is a subset of $\mathbb{N}^{*}\times\mathbb{N}$. When dealing with the problem of deciding whether $(x,k)\in \mathbb{N}^{*}\times\mathbb{N}$ is an element of $Q$, $(x,k)$ is referred to as an \emph{instance}; the second element of such a pair is called the \emph{parameter}. We assume for simplicity that $k\leq |x|$ holds for all elements of the problems we work with (since instances with $k>|x|$ are trivial).
\end{defin}

\begin{rem}\label{input_rem}
When an instance of a parameterized problem is given as input to a RAM, we assume that the parameter is given in unary encoding, meaning that if the parameter value is $k\in\mathbb{N}$, then $k$ registers, each containing the value $1$, are used to encode the parameter value. The size of $x$, the main part of the input, is taken as the sum of the sizes of the binary encodings of the integers that make up $x$. A RAM can therefore efficiently convert between a reasonable encoding using integers, and any reasonable encoding using a finite alphabet.
\end{rem}

\begin{defin}
A random access machine $\mathbb{M}$ is \emph{parameter-restricted} if there is a computable function $f$ and a polynomial function $p$, such that on any input $(x,k)$:

-$\mathbb{M}$ terminates after executing at most $f(k)p(|x|)$ instructions;

-throughout any computation, the registers contain only numbers that are $\leq f(k)p(|x|)$.
\end{defin}

The above definition replaces the ``polynomial-time'' restriction on the running time in the classical setting, and is similar to the definition of ``$\kappa$-restricted'' in Chap.\ 6 of \cite{fg}. Note that the second condition is a bound on the numbers stored in the registers, not on the number of bits that would be needed for the binary encoding of these numbers.

The next definition is easily seen to be equivalent to the usual definition of the class $\mathbf{FPT}$ \cite{fg}.

\begin{defin}
We define $\mathbf{FPT}$ as the class of parameterized problems that are decidable by parameter-restricted (deterministic) RAMs.
\end{defin}

An \emph{alternating random access machine (ARAM)} is a RAM with additional \emph{existential} and \emph{universal guess instructions}, EXISTS and FORALL, both of which place a nondeterministically chosen integer from the interval $[0,r_0]$ into $r_0$ (the difference between the two instructions is in how the acceptance of the input is defined). In the case of parameter-restricted machines, we may assume that the upper end of the range of each nondeterministic guess is the largest number that the machine can store in its registers, given the input, because the machine can first guess a number in the maximum range, and then trim the result by computing the remainder of a division by the size of the intended range. For ARAMs, the notions of computation (on an input), configuration, computation path, $t$-alternation, and acceptance/rejection of an input are defined in the standard way (see \cite{fg}, section 8.1, pp. 168-170). Following \cite{cfg3}, we mean by ``$t$-alternating'' that the first guess instruction is existential.

We give the definitions of some complexity classes in terms of nondeterministic RAMs. These are not the original definitions, but characterizations proved in \cite{cfg1} and \cite{cfg3}.

\begin{defin}
A parameterized problem $Q$ is in $\mathbf{AW[P]}$ [in $\mathbf{W[P]}$] if it is decided by an ARAM [a $1$-alternating ARAM] $\mathbb{A}$ which, for some computable function $h$, on any input $(x,k)$, executes at most $h(k)$ nondeterministic instructions on any computation path.
\end{defin}

\begin{defin}
An ARAM $\mathbb{A}$ is \emph{tail-nondeterministic} if there is a computable function $g$ such that, on any input $(x,k)$, $\mathbb{A}$ executes nondeterministic instructions only among the last $g(k)$ steps of any computation path. For every $t\geq 1$, $\mathbf{A}[t]$ denotes the class of parameterized problems that are decidable by parameter-restricted tail-nondeterministic $t$-alternating ARAMs. $\mathbf{AW[*]}$ denotes the class of parameterized problems that are decidable by parameter-restricted tail-nondeterministic ARAMs.
\end{defin}

An \emph{oracle (A)RAM} or \emph{(A)RAM with access to an oracle} is a machine with an additional set of \emph{oracle registers} that store non-negative integers, as well as instructions that copy the contents of $r_0$ to a specified oracle register and vice-versa, and a QUERY instruction, which queries the oracle with the contents of the oracle registers, and causes the register $r_0$ to contain the values $1$ or $0$ (representing the oracle's answer). Note that we only work with oracles that decide parameterized problems, and that the parameter of a query instance must be encoded in unary (see Remark \ref{input_rem}). Most previous results involving oracles in Parameterized Complexity place the following restriction on oracle machines (see, for example, \cite{muell}). We will consider additional restrictions to oracle access in the next section.

\begin{defin}
An oracle (A)RAM $\mathbb{A}$ has \emph{balanced} access to an oracle if there is a computable function $g$ such that, on input $(x,k)$, any query $(y,k')$ made to the oracle, on any computation path, satisfies $k'\leq g(k)$.
\end{defin}

\subsection{Relational structures and first-order formulas}
A \emph{relational vocabulary} $\tau$ is a set of pairs of symbols and positive integers, called \emph{relational symbols} and \emph{arities}, respectively. A \emph{relational structure $\mathcal{A}$ with vocabulary $\tau$} is a set containing: a set $A$, called the \emph{universe of $\mathcal{A}$}, and for each pair $(s,r)\in \tau$, a relation $R^s\subseteq A^r$. We only use relational structures with finite universes and finite vocabularies, so we always assume that $A=\{0,\ldots,n\}$, for some $n\in\mathbb{N}$. A \emph{first-order formula $\phi$ with vocabulary $\tau$} is constructed in the same way as a quantified Boolean formula, except that the \emph{atomic formulas} are not variables, but expressions of the form $x_1=x_2$ or $R^s x_1\ldots x_r$, where $x_1,x_2,\ldots,x_r$ are variables and $(s,r)\in\tau$. 

Whenever a pair $(\mathcal{A},\phi)$ is given, it is assumed implicitly that $\mathcal{A}$ and $\phi$ share the same relational vocabulary. We say that $\mathcal{A}$ \emph{satisfies $\phi$} if $\phi$ is true when all atomic formulas are evaluated based on the relations in $\mathcal{A}$ and all variables are taken as ranging over $A$.

We define some important classes of first-order formulas with relational vocabularies. For every $t\in\mathbb{N}$, let $\Sigma_t$ be the set of all first-order formulas of the form
\begin{displaymath}
\exists x_{1,1}\ldots\exists x_{1,k_1}\forall x_{2,1}\ldots \forall x_{2,k_2}\ldots \ldots Q x_{t,1}\ldots Q x_{t,k_t}:\psi(x_1,\ldots,x_t),
\end{displaymath}
where $\psi(x_1,\ldots,x_t)$ is a quantifier-free formula ($Q$ means $\exists$ if $t$ is odd, $\forall$ if $t$ is even). For all $t,r\in\mathbb{N}$, let $\Sigma_t[r]$ be the set of all $\Sigma_t$-formulas with vocabularies in which all arities are $\leq r$. Finally, let $\textrm{PNF}$ be the set of all first-order formulas in \emph{prenex normal form}, meaning that they are of the form $Q_1 x_1\ldots Q_t x_t:\psi(x_1,\ldots,x_t)$, where $\psi(x_1,\ldots,x_t)$ is a quantifier-free formula and $Q_1,\ldots,Q_t\in\{\exists,\forall\}$.

For certain classes of formulas $F$, the following parameterized \emph{model checking} problems are complete for various important complexity classes.
\begin{center}
\fbox{
\begin{minipage}{12cm}
$p\textsc{-MC}(F)$\\
\begin{tabular}{ r l }
Input: & \parbox[t]{9cm}{$(\mathcal{A},\phi)$, where $\mathcal{A}$ is a relational structure, $\phi\in F$.}\\
Parameter: & $|\phi|.$\\
Problem: & Decide whether $\mathcal{A}$ satisfies $\phi$.
\end{tabular}
\end{minipage}}
\end{center}

\begin{center}
\fbox{
\begin{minipage}{12cm}
$p\textit{-var}\textsc{-MC}(F)$\\
\begin{tabular}{ r l }
Input: & \parbox[t]{9cm}{$(\mathcal{A},\phi)$, where $\mathcal{A}$ is a relational structure, $\phi\in F$.}\\
Parameter: & The number of variables in $\phi$.\\
Problem: & Decide whether $\mathcal{A}$ satisfies $\phi$.
\end{tabular}
\end{minipage}}
\end{center}

\begin{rem}\label{rep_rem}
A relational structure can be represented by listing the elements of its universe, followed by the tuples in each relation. However, for a RAM to check whether some tuple $(a_1,\ldots,a_r)$ is an element of some $r$-ary relation $R^s$ may then take a number of steps that depends on $\|\mathcal{A}\|:=|A|+|\tau|+\sum_{(s,r)\in \tau}|R^s|\cdot r$ (even if the elements of each relation are listed in lexicographic order, and binary search is used). To avoid this we will assume, whenever $\mathcal{A}$ contains only relations of arity at most some fixed number $l$, that each $r$-ary relation ($r\leq l$) is stored as an $|A|^r$-size array of ones and zeroes, each number representing whether or not some element of $A^r$ is a member of the relation. Furthermore, we will assume that the location of every such array is stored in a look-up table. This way, checking whether $(a_1,\ldots,a_r)\in R^s$ only takes a \emph{constant} number of operations for a RAM, at the cost of increasing the size of the representation of $\mathcal{A}$ in memory to $O(\poly(\|\mathcal{A}\|))$ (since $l$ is constant). This also means that adding and removing elements requires only constant time.
\end{rem}

\begin{defin}
Let $Q$ and $Q'$ be parameterized problems. An algorithm $\mathbb{R}$ is an \emph{fpt-reduction from $Q$ to $Q'$} if there exist computable functions $f$ and $g$, and a polynomial function $p$, such that for any instance $(x,k)$ of $Q$ we have a) $(y,k'):=\mathbb{R}(x,k)\in Q'$ if and only if $(x,k)\in Q$; b) $\mathbb{R}$ runs in time $f(k)p(|x|)$; and c) $k'\leq h(k)$.
\end{defin}

For any parameterized problem $Q$, we denote by $[Q]^{\textrm{fpt}}$ the set of parameterized problems that are $\leq^{\textrm{fpt}}Q$, meaning fpt-reducible to $Q$.

\begin{fact}[\cite{fg2001, cfg1},\cite{abrah}]
For every $t\in\mathbb{N}$, $\mathbf{A}[t]=[p\textsc{-MC}(\Sigma_t)]^{\textrm{fpt}}=[p\textsc{-MC}(\Sigma_t[3])]^{\textrm{fpt}}$.

$\mathbf{AW[SAT]}=[p\textit{-var}\textsc{-MC}(\textrm{PNF})]^{\textrm{fpt}}$.
\end{fact}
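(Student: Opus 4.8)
\emph{Proof idea.} All three identities are known (\cite{fg2001,cfg1,abrah}); the plan is to re-derive each of them by the same two-step scheme: \emph{(i)} place the relevant model-checking problem inside the corresponding machine class, and \emph{(ii)} show that every problem in that machine class fpt-reduces to the model-checking problem. Since $\mathbf{A}[t]$, $\mathbf{AW[SAT]}$ and the closures $[Q]^{\textrm{fpt}}$ are all (easily seen to be) closed under fpt-reductions, \emph{(i)} and \emph{(ii)} together give the claimed equalities.

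For \emph{(i)} I would first handle $p\textsc{-MC}(\Sigma_t[3])$, exhibiting a parameter-restricted tail-nondeterministic $t$-alternating ARAM for it: on input $(\mathcal{A},\phi)$ it stores $\mathcal{A}$ in the array representation of Remark~\ref{rep_rem} (legitimate because all arities are $\leq 3$), so that later each atomic subformula of $\phi$ is evaluable in $O(1)$ steps; then, in its last $O(|\phi|)$ steps, it guesses an element of $A$ for each of the $\leq|\phi|$ variables of $\phi$, existentially or universally according to the quantifier prefix, and finally evaluates the quantifier-free matrix in $O(|\phi|)$ further steps. Since $|\phi|$ is the parameter, all nondeterministic steps lie among the last $g(k)$ steps for a computable $g$. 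To pass from bounded to arbitrary arity I would give an fpt arity-reduction $p\textsc{-MC}(\Sigma_t)\leq^{\textrm{fpt}}p\textsc{-MC}(\Sigma_t[3])$: enlarge the universe with elements coding $r$-tuples, link each such element to its components by one ternary relation, replace every $r$-ary relation by a unary one on tuple-elements, and rewrite atomic formulas accordingly, pushing the few auxiliary quantifiers into the appropriate existing block so that the formula remains in $\Sigma_t[3]$. This yields $[p\textsc{-MC}(\Sigma_t)]^{\textrm{fpt}}=[p\textsc{-MC}(\Sigma_t[3])]^{\textrm{fpt}}\subseteq\mathbf{A}[t]$.

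For \emph{(ii)}, the heart of the matter, I would encode an ARAM computation as a model-checking instance, in the spirit of the Cook--Levin theorem (as is done in \cite{cfg3}). Given a parameter-restricted tail-nondeterministic $t$-alternating ARAM $\mathbb{A}$ with time bound $f(k)p(|x|)$ and tail length $g(k)$, and an input $(x,k)$: simulate the deterministic prefix of $\mathbb{A}$ to obtain the configuration $C$ at the onset of the tail phase; build a structure $\mathcal{A}$ over the universe $\{0,\dots,f(k)p(|x|)\}$ recording the register contents of $C$ (a binary relation), the graphs of the arithmetic operations and the linear order, and the constants from $\mathbb{A}$'s program --- all of arity $\leq 3$ and of polynomial size; and build $\phi\in\Sigma_t[3]$ whose prefix has one variable per guess instruction of $\mathbb{A}$ (at most $g(k)$ of them), with quantifier types and the $t$ alternations copied from $\mathbb{A}$, and whose quantifier-free matrix --- of size bounded by a computable function of $k$ and the code of $\mathbb{A}$ --- asserts that the $g(k)$-step tail computation started from $C$ with the guessed integers is accepting. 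This reduction runs in fpt-time, and since its matrix needs only relations of arity $\leq 3$, it gives $\mathbf{A}[t]\subseteq[p\textsc{-MC}(\Sigma_t[3])]^{\textrm{fpt}}$, closing the chain of equalities for the $\mathbf{A}$-hierarchy.

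The step I expect to be the main obstacle is writing that matrix, because of indirect addressing: one step of $\mathbb{A}$ may access a register whose index is the current content of some fixed register, so the set of registers relevant during the tail phase is not known in advance. I would handle this by additionally quantifying, inside the matrix, over the $O(g(k))$ register indices that occur during the tail phase and over their contents at each of the $g(k)$ time steps, and asserting step-by-step consistency (each value is obtained from earlier ones via the arithmetic relation dictated by the current instruction; a value read from a register not yet written in the tail is read off $C$); these auxiliary variables go into existing quantifier blocks, with a little extra care when $t$ is even so that $\phi$ stays in $\Sigma_t$. Finally, $\mathbf{AW[SAT]}=[p\textit{-var}\textsc{-MC}(\textrm{PNF})]^{\textrm{fpt}}$ I would prove in the same style --- now with unboundedly many alternations and with the parameter being the number of variables --- by translating between a $p\textit{-var}\textsc{-MC}(\textrm{PNF})$ instance and an alternating weighted-satisfiability instance: for each first-order variable use a one-hot block of $|A|$ Boolean variables coding which element of $A$ it takes, so that the target weight equals the number of variables and the prenex alternation pattern is preserved, translating the matrix directly via the tuple-membership relations of $\mathcal{A}$; the converse reduction is analogous. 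This is the reduction underlying \cite{abrah}.
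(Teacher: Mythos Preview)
The paper does not prove this statement; it records it as a \emph{Fact} with citations to \cite{fg2001,cfg1,abrah}, and only summarizes (in Remark~\ref{reduction_rem}) the one ingredient it later reuses, namely the encoding from \cite{cfg3} of a tail-nondeterministic ARAM's tail computation as a $p\textsc{-MC}(\Sigma_t[3])$ instance. So there is no in-paper proof to compare against.

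That said, your outline is faithful to the cited material. Your step \emph{(i)} --- the tail-nondeterministic $t$-alternating ARAM deciding $p\textsc{-MC}(\Sigma_t[3])$ after storing $\mathcal{A}$ in the array format of Remark~\ref{rep_rem}, together with the standard arity-reduction --- is exactly the easy direction in \cite{cfg3}. Your step \emph{(ii)} --- simulate the deterministic prefix, build $\mathcal{A}$ over $\{0,\dots,f(k)p(|x|)\}$ with a binary register-contents relation and the arithmetic graphs, and write a $\Sigma_t[3]$-formula whose quantifier prefix mirrors the guesses and whose matrix asserts an accepting $g(k)$-step tail computation --- is precisely what Remark~\ref{reduction_rem} describes, including the point that $\phi$ depends only on $k$, the program of $\mathbb{A}$, and the first nondeterministic instruction. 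Your identification of indirect addressing as the main technical hurdle, handled by quantifying over the $O(g(k))$ register indices and contents touched during the tail, matches how \cite{cfg3} proceeds. For $\mathbf{AW[SAT]}$, the one-hot encoding of universe elements into weighted Boolean blocks is the standard reduction underlying \cite{abrah}. In short, your plan recapitulates the cited proofs rather than diverging from anything the present paper does.
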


\begin{rem}\label{reduction_rem}
In the proof of their machine-based characterization of $\mathbf{A}[t]$, Chen, Flum, and Grohe \cite{cfg3} show how the parameter-restricted computation of a $t$-alternating tail-nondeterministic RAM can be encoded as a pair $(\mathcal{A},\phi)$. We refer the interested reader to \cite{cfg3} for the details, and recall only some facts about this reduction that we use here. Let $f(k)p(|x|)$ be an upper bound on the running time, the largest number of a register used, and the largest integer stored during the computation of the machine $\mathbb{A}$ on input $(x,k)$. The relational structure $\mathcal{A}$ has universe $\{0,\ldots,f(k)p(|x|)\}$ and contains relations representing the instructions of $\mathbb{A}$'s program and the contents of the accessed registers at the end of the deterministic part of the computation (a binary relation $Reg$ is defined so that $(y,z)\in Reg$ if and only if $r_y=z$ right before the first nondeterministic instruction is executed). All relations in $\mathcal{A}$ have arity $\leq 3$. The first-order formula $\phi$ has the same vocabulary as $\mathcal{A}$ and encodes the nondeterministic computation of $\mathbb{A}$ (the last $h(k)$ steps). The formula is constructed in such a way that changes to the contents of the registers are kept track of, and access to the contents of the registers at the start of the nondeterministic computation are encoded using the relation $Reg$. A close look at the construction in \cite{cfg3} reveals that computing the relational structure $\mathcal{A}$ requires knowledge of $\mathbb{A}$ and of the input $(x,k)$, but that computing the formula $\phi$ only requires knowledge of $k$, $\mathbb{A}$, and the number of the first nondeterministic instruction that is executed on input $(x,k)$ (all of these being independent of $|x|$).
\end{rem}

\section{Parameterized relativization}\label{rel_sec}
The guiding principle in our approach to defining nondeterministic oracle RAMs will be that all of the special resources of a machine (nondeterminism, oracle queries, random guesses -- everything beyond the basic deterministic operations) should be restricted in the same way, in order for these resources to interact well with each other.

\begin{defin}
An oracle (A)RAM $\mathbb{A}$ has \emph{parameter-bounded} access to an oracle if it has balanced access to the oracle, and there is a computable function $h$ such that, on input $(x,k)$, $\mathbb{A}$ makes at most $h(k)$ queries to the oracle on any computation path. $\mathbb{A}$ is said to have \emph{tail-restricted} access to an oracle if it has balanced access to the oracle, and there is a computable function $h$ such that, on input $(x,k)$, $\mathbb{A}$ makes queries to the oracle only among the last $h(k)$ steps of any computation path.
\end{defin}

Because we will use different kinds of oracle machines, and the exponent notation for the relativization of a complexity class is difficult to customize, we will also use the (older) parenthesis notation: If $C$ is a complexity class that is characterized by machines, we denote by $C(O)$ the class characterized by oracle machines of the same type as the ones characterizing $C$, with unrestricted access to the oracle $O$. Similarly, $C(O)_{bal}$ denotes the class defined by oracle $C$-machines with \emph{balanced} access to the parameterized oracle, $C(O)_{para}$ denotes the class defined by oracle $C$-machines with \emph{parameter-bounded} access to the oracle, and $C(O)_{tail}$ denotes the class defined by tail-nondeterministic oracle machines with the same restrictions as the machines that define $C$. The exponent notation is only used when the type of oracle access is the ``natural'' one for the type of machine being considered (so $\mathbf{A}[1]^O=\mathbf{A}[1](O)_{tail}$ and $\mathbf{W[P]}^O=\mathbf{W[P]}(O)_{para}$). For $\mathbf{FPT}$ we always specify the type of oracle access.
\\\\
\emph{Relativization results for tail-nondeterministic random access machines.}

We give an informal overview of the proof that $\mathbf{A}[1]^{p\textsc{-MC}(\Sigma_t[3])}=\mathbf{A}[t+1]$, to highlight the role played by the choice of the oracle and by the restrictions made to the tail-nondeterministic oracle machines (for a comparison with the proof that $\mathbf{NP}^{\Sigma_i\textsc{Sat}}=\mathbf{\Sigma}^P_{i+1}$, see \cite{ab}, Section 5.5).

For the ``$\supseteq$''-inclusion, we have that an $\mathbf{A}[1]$-machine with a $p\textsc{-MC}(\Sigma_t[3])$-oracle (which is complete for $\mathbf{A}[t]$) can first deterministically simulate the deterministic part of the computation of an $\mathbf{A}[t+1]$-machine on input $(x,k)$. The oracle $\mathbf{A}[1]$-machine then enters the nondeterministic phase of its computation and uses its own nondeterministic guesses to simulate the first block of existential guesses of the simulated machine (until a universal instruction is encountered). The computation of the $\mathbf{A}[t+1]$-machine from this point onward (which starts with a universal guess instruction and has $\leq t-1$ alternations) can be encoded as an instance $((\mathcal{A},\phi),|\phi|)$ of $p\textsc{-MC}(\Sigma_t[3])$ (see Remark \ref{reduction_rem}), but the size of $\mathcal{A}$ depends on $|x|$. Therefore, $\mathcal{A}$ must (for the most part) be computed by the oracle $\mathbf{A}[1]$-machine and written to the oracle registers ahead of time, during the deterministic phase of the computation, with only the formula $\phi$ left to be computed during the nondeterministic phase. This is why it is necessary to allow tail-nondeterministic oracle machines access to their oracle registers throughout the entire computation.

For the reverse inclusion, we have that an $\mathbf{A}[t+1]$-machine can simulate an oracle $\mathbf{A}[1]$-machine on input $(x,k)$, by first simulating the deterministic part of the computation deterministically, and then using $(t+1)$-alternating nondeterminism to simulate both the oracle $\mathbf{A}[1]$-machine's existential guesses, as well as all of the $p\textsc{-MC}(\Sigma_t[3])$-queries (this is accomplished in the same way as in the classical proof). In order to evaluate the queried instances, however, the $\mathbf{A}[t+1]$-machine's computation must be in its nondeterministic phase, so it is essential that:

- the simulated oracle machine can not make queries outside of the last $h(k)$ steps of its computation, for some computable function $h$;

- the size of the formulas in the queried instances is $\leq g(k)$, for some computable function $g$ (balanced oracle access);

- the quantifier-free part of a formula can be evaluated efficiently (relational structures must be encoded in such a way that expressions involving relations can be evaluated by a RAM in time independent of the size of the relational structure -- see Remark \ref{rep_rem}).

\begin{theor}\label{ah_oracl_thm} For every $t\geq 1$, $\mathbf{A}[1]^{p\textsc{-MC}(\Sigma_t[3])}=\mathbf{A}[t+1]$.
\end{theor}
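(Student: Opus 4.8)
The plan is to prove the two inclusions separately, mirroring the classical argument that $\mathbf{NP}^{\Sigma_t\textsc{Sat}}=\mathbf{\Sigma}^P_{t+1}$, while carrying the extra bookkeeping forced by tail-nondeterminism that was anticipated in the overview above.

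For $\mathbf{A}[t+1]\subseteq\mathbf{A}[1]^{p\textsc{-MC}(\Sigma_t[3])}$, I would fix a parameter-restricted tail-nondeterministic $(t+1)$-alternating ARAM $\mathbb{A}$ deciding $Q$. On input $(x,k)$ its tail computation begins with an existential guess block, after which comes a $\leq t$-block computation $C$ that starts with a universal guess; $\mathbb{A}$ accepts iff some choice of that first block makes the complementary machine $\bar C$ (obtained by swapping $\exists\leftrightarrow\forall$ and accept$\leftrightarrow$reject, so that its first guess is existential) reject. By Remark \ref{reduction_rem} applied to $\bar C$, for each fixed value of the first block the computation of $\bar C$ encodes as an instance $((\mathcal{A},\phi),|\phi|)$ of $p\textsc{-MC}(\Sigma_t[3])$, where $\phi$ depends only on $k$, $\mathbb{A}$ and the index of $\mathbb{A}$'s first nondeterministic instruction, while $\mathcal{A}$ (of size polynomial in $|x|$) depends, through the relation $Reg$, only on the register contents reached just before $\bar C$'s first guess. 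The oracle $\mathbf{A}[1]$-machine $\mathbb{B}$ then works as follows: (i) deterministically simulate $\mathbb{A}$'s deterministic prefix, recording that instruction index and the register contents reached; (ii) compute the part of $\mathcal{A}$ that does not depend on the first guess block, together with $\phi$, and write them to the oracle registers — possible precisely because oracle registers may be accessed during the deterministic phase; (iii) enter the tail, simulate $\mathbb{A}$'s first guess block (and the deterministic steps up to its second guess block) using $\mathbb{B}$'s own existential guesses, and patch the $O(g(k))$ entries of the $Reg$-array affected by these steps, each in constant time by Remark \ref{rep_rem}; (iv) issue a single QUERY and accept iff the answer is $0$. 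Then $\mathbb{B}$ is parameter-restricted, $1$-alternating, tail-nondeterministic, its unique query lies in its tail, and the query parameter $|\phi|$ is bounded by a function of $k$, so $\mathbb{B}$ has tail-restricted oracle access.

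For the reverse inclusion I would fix a parameter-restricted tail-nondeterministic $1$-alternating ARAM $\mathbb{B}$ with tail-restricted (hence balanced) access to a $p\textsc{-MC}(\Sigma_t[3])$-oracle, making on input $(x,k)$ at most $h(k)$ queries $(\mathcal{C}_j,\chi_j)$ with $|\chi_j|\leq g'(k)$, at most $g(k)$ guesses, and all of this within the last $s(k)$ steps. A $(t+1)$-alternating ARAM $\mathbb{A}$ simulates $\mathbb{B}$'s deterministic prefix deterministically and then, in its first existential block, guesses all of $\mathbb{B}$'s existential guesses, purported oracle answers $b_1,\dots,b_q$, and the leading existential moves for the queries answered $1$; in alternation blocks $2,\dots,t+1$ it plays the existential role in the $\leq t$-block evaluation game on $(\mathcal{C}_j,\chi_j)$ for each $j$ with $b_j=1$, and in the evaluation game on the prenex form of $\neg\chi_j$ for each $j$ with $b_j=0$, merging these games block-by-block. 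Finally $\mathbb{A}$ deterministically re-runs $\mathbb{B}$'s tail using the guessed answers and accepts iff $\mathbb{B}$ accepts, every $\chi_j$ with $b_j=1$ is satisfied by its guessed assignment, and every $\chi_j$ with $b_j=0$ is falsified by it. Each quantifier of each $\chi_j$ corresponds to a single ARAM guess instruction (its variables range over the universe of $\mathcal{C}_j$), so $\mathbb{A}$ makes $O(g(k)+h(k)g'(k))$ guesses, and evaluating the quantifier-free part of $\chi_j$ against $\mathcal{C}_j$ costs $O(g'(k))$ steps by Remark \ref{rep_rem}; hence $\mathbb{A}$ is parameter-restricted, $(t+1)$-alternating, and tail-nondeterministic with tail length $O(s(k)+h(k)g'(k))$, and it decides $Q$.

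The step I expect to be the main obstacle is (iii) in the first inclusion. Since a tail-nondeterministic machine can touch only $O(g(k))$ register cells during its nondeterministic phase, the polynomially large structure $\mathcal{A}$ cannot be produced inside the tail; it must be precomputed and stored, and the only portion that genuinely depends on the nondeterministic guesses — the affected entries of $Reg$ — must be both small and cheap to insert into the already-written structure. Both conditions hold for the reduction of \cite{cfg3} together with the array encoding of Remark \ref{rep_rem}, but an arbitrary $\mathbf{A}[t]$-complete oracle carries no such ``locally updatable'' encoding of its reduction, which is why the theorem is stated for the specific oracle $p\textsc{-MC}(\Sigma_t[3])$. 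A secondary point to watch is that balanced access is what keeps the number of extra guesses and the cost of quantifier-free evaluation bounded by a function of $k$ in the reverse inclusion, so that tail-nondeterminism of $\mathbb{A}$ is preserved.
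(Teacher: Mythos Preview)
Your proposal is correct and follows essentially the same route as the paper: for $\supseteq$ you precompute the relational structure in the oracle registers, simulate the leading existential block in the tail while patching the $Reg$-array, and finish with a single negated query; for $\subseteq$ you guess all of $\mathbb{B}$'s existential moves and oracle answers up front and merge the $\Sigma_t$-evaluation games block by block with a one-level quantifier shift for the queries guessed false. One small discrepancy worth noting: the paper computes $\phi$ \emph{inside} the tail rather than in your step~(ii), because $\phi$ encodes the computation starting from the first universal instruction, and both that instruction's index $c$ and the step-count $d$ at which it is reached may depend on the existential guesses; since $|\phi|$ is bounded by a computable function of $k$ this costs nothing in the tail, so your plan is easily repaired by moving the construction of $\phi$ to step~(iii) (or by first normalizing $\mathbb{A}$ so that $c$ and $d$ are independent of the guesses).
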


\begin{proof}
As mentioned in Remark \ref{reduction_rem}, the computation of an $\mathbf{A}[t]$-machine can be encoded as an instance of $p\textsc{-MC}(\Sigma_t[3])$, and here we make extensive use of this reduction. Note, however, that in \cite{cfg3}, relational structures and FO formulas can contain \emph{constant symbols}, which are interpreted as representing fixed values from the universe of the relational structure, when evaluating a formula. Within the scope of this proof we will therefore also allow instances of $p\textsc{-MC}(\Sigma_t[3])$ to contain such constant symbols, but this does not change the fact that $[p\textsc{-MC}(\Sigma_t[3])]^{\textrm{fpt}}=\mathbf{A}[t]$ for all $t\geq 1$ \cite{cfg3}, nor does it cause any complications in the proof.

``$\supseteq$'': Let $Q$ be a parameterized language in $\mathbf{A}[t+1]$. Then, for some computable functions $f,h$, and a polynomial function $p$, there is a $(t+1)$-alternating ARAM $\mathbb{A}$ which, on any input $(x,k)$, runs in time $f(k)p(|x|)$, with nondeterministic instructions only among the last $h(k)$ steps, and accepts if and only if $(x,k)\in Q$.

We describe an $\mathbf{A}[1]$-machine $\mathbb{A}_1$ with oracle access to $p$-\textsc{MC}($\Sigma_t[3]$) which decides $Q$. Let $\mathbb{A}'$ be an $\mathbf{A}[t+1]$-machine obtained from $\mathbb{A}$ by replacing its HALT instruction with a sequence of instructions that flip the output before halting. (Note that we may assume that the program of every machine has a single HALT instruction, because it is trivial to modify a program in such a way that it has only one such instruction.) Let $\mathbb{A}''$ be an $\mathbf{A}[t+1]$-machine obtained from $\mathbb{A}'$ by replacing all EXISTS instructions with FORALL instructions, and vice-versa. Let $l$ be the (constant) number of additional instructions in the program of $\mathbb{A}'$. On input $(x,k)$, $\mathbb{A}_1$ does the following:
\begin{itemize}
\item[1.] $\mathbb{A}_1$ simulates $\mathbb{A}'$ on input $(x,k)$ until the first non-deterministic instruction is about to be executed by $\mathbb{A}'$. (If the simulation terminates before a non-deterministic instruction is executed, $\mathbb{A}_1$ halts as well, with opposite outcome.) After this part, the contents of the standard registers of $\mathbb{A}'$ right before it executes its first nondeterministic instruction, are accessible to $\mathbb{A}_1$.
\item[2.] $\mathbb{A}_1$ computes the relational structure $\mathcal{A}$, with universe $\{0,\ldots, f(k)p(|x|)\}$, containing:

-constant symbols for all instruction numbers and register numbers referred to by instructions of $\mathbb{A}'$'s program;

-relations encoding the standard instructions of an ARAM, restricted to the universe of $\mathcal{A}$;

-a relation $Reg$, encoding the contents of $\mathbb{A}'$'s standard registers right before it executes its first nondeterministic instruction.

A representation of this relational structure is stored in the oracle registers of $\mathbb{A}_1$.
\item[3.] $\mathbb{A}_1$ now enters the nondeterministic phase of its computation. It guesses (with existential quantifier) $h(k)$ integers between $0$ and $f(k)p(|x|)$, and stores them in the standard registers.
\item[4.] Using the numbers guessed in part 3 to simulate EXISTS instructions, $\mathbb{A}_1$ continues simulating $\mathbb{A}'$ until the first FORALL instruction is encountered. (If the simulation terminates before a universal guess instruction is encountered, $\mathbb{A}_1$ halts with the opposite outcome.) Throughout this part, $\mathbb{A}_1$ also modifies the relation $Reg$ in $\mathcal{A}$ to reflect the changes to the contents of the registers of $\mathbb{A}'$ (at most $h(k)+l$ changes, because the simulated computation is in its nondeterministic phase).
\item[5.] Let $c$ be the number of the FORALL instruction encountered in part 4, and let $d$ be the number of steps of $\mathbb{A}'$'s computation that were simulated during part 4. Now $\mathbb{A}_1$ computes a $\Sigma_{t}$-formula $\phi$, which encodes a $t$-alternating computation of $\mathbb{A}''$ of at most $h(k)+l-d$ steps, starting at instruction $c$ (see Remark \ref{reduction_rem}). This formula is stored in the oracle registers, after the representation of the structure $\mathcal{A}$.
\item[6.] After part 5, the oracle registers together contain an instance of $p$-\textsc{MC}($\Sigma_t[3]$). Finally, $\mathbb{A}_1$ queries the $p$-\textsc{MC}($\Sigma_t[3]$)-oracle and accepts if the oracle's answer is negative, otherwise it rejects.
\end{itemize}

The computations performed in parts 3-6 require some number of steps that is a computable function of $k$, and independent of $|x|$ (this is trivial for parts 3, 4 and 6, and holds for part 5 because the formula encodes a computation of $O(h(k))$ steps, and therefore has size $h'(h(k))$, for some computable function $h'$). To see why $\mathbb{A}_1$ accepts on input $(x,k)$ if and only if $(x,k)\in Q$, note that, on a given input, a computation of $\mathbb{A}$ up to the first FORALL instruction is essentially identical to a computation of $\mathbb{A}''$ up to the first EXISTS instruction (unless the computation terminates earlier, in which case only the outcome differs). Since $\mathbb{A}''$ has both the guess instruction types and the outcome reversed, the oracle will answer `yes' if and only if the computation of $\mathbb{A}$ after the first FORALL instruction does not accept the input.

``$\subseteq$'': Let $Q$ be a parameterized problem in $\mathbf{A}[1]^{p\textsc{-MC}(\Sigma_t[3])}$. Then, for some computable functions $f,h$, and a polynomial function $p$, there is an ARAM $\mathbb{A}_1$ with tail-restricted access to a $p\textsc{-MC}(\Sigma_t[3])$-oracle, which, on any input $(x,k)$, runs in time $f(k)p(|x|)$, with nondeterministic and oracle query instructions only among the last $h(k)$ steps, and accepts the input if and only if $(x,k)\in Q$. Furthermore, since $\mathbb{A}_1$'s oracle access is balanced (due to being tail-restricted), there is a computable function $g$ such that, on input $(x,k)$, any query made to the oracle has parameter value $k'\leq g(k)$.

We describe an $\mathbf{A}[t+1]$-machine $\mathbb{A}$ that decides $Q$. On input $(x,k)$, $\mathbb{A}$ does the following:
\begin{itemize}
\item[1.] $\mathbb{A}$ simulates $\mathbb{A}_1$ until either the first EXISTS instruction or the first QUERY instruction is encountered. Throughout this part, the contents of both the standard and the oracle registers of $\mathbb{A}_1$ are stored and maintained separately in the standard registers of $\mathbb{A}$.
\item[2.] $\mathbb{A}$ computes a relational structure $\mathcal{A}$, encoding the instruction set of $\mathbb{A}_1$ and the contents of $\mathbb{A}_1$'s standard and oracle registers right before it executes either its first existential or its first oracle query instruction.
\item[3.] $\mathbb{A}$ makes a number of guesses with existential quantifiers:\\
-$g_1,\ldots,g_{h(k)}\in\{0,\ldots,f(k)p(|x|)\}$ ($\mathbb{A}$'s guesses for all of the existential guesses $\mathbb{A}_1$ will make);\\
-$a_1,\ldots,a_{h(k)}\in\{0,1\}$ ($\mathbb{A}$'s guesses for the answers to $\mathbb{A}_1$'s oracle queries).\vspace{0.2cm}\\
Next, $\mathbb{A}$ makes the following guesses with alternating quantifiers:\\
-$u_{i,j}\in \{0,\ldots,f(k)p(|x|)\}^{g(k)}$, with $i\in[t+1]$ and $j\in[h(k)]$, where the strings $u_{1,j}$ (for all $j$) are guessed as one block with an existential quantifier, the strings $u_{2,j}$ are guessed as a block with a universal quantifier, and so on (existential quantifier for odd $i$, universal for even). (These strings form $\mathbb{A}$'s guesses for witness strings for $\mathbb{A}_1$'s queries.)
\item[4.] Now $\mathbb{A}$ continues simulating $\mathbb{A}_1$ as follows:\\
-on the $i$-th (existential) guess of $\mathbb{A}_1$, $\mathbb{A}$ uses $g_i$ instead of the guess value;\\
-on the $j$-th query of $\mathbb{A}_1$ ($j\in[h(k)]$), let 
\begin{displaymath}
\phi_j=\exists x_{1,1},\ldots,\exists x_{1,m_1}\forall x_{2,1},\ldots,\forall x_{2,m_2}\ldots Qx_{t,1},\ldots,Qx_{t,m_t}\psi(x_{1,1},\ldots,x_{t,m_t})
\end{displaymath}
be the $t$-alternating formula of the query instance, where $\psi$ is quantifier-free and $m_1+\ldots+m_t\leq g(k)$. Let $\mathcal{A}_j$ be the relational structure of the $j$-th query instance, and let $U_j$ be the size of the universe of $\mathcal{A}_j$ (so we may assume that $A_j=\{0,\ldots,U_j-1\}$). If $a_j=1$ ($\mathbb{A}$ guessed that the oracle answers 'yes'), $\mathbb{A}$ evaluates $\phi_j$ by setting $x_{k,l}$ to $(u_{k,j,l}\mod U_j)$, for each $k\in[t],l\in[m_k]$. If $a_j=0$, $\mathbb{A}$ evaluates $\phi_j$ by setting $x_{k,l}$ to $(u_{k+1,j,l}\mod U_j)$, for each $k\in[t],l\in[m_k]$. If the outcome of the evaluation does not match $a_j$, $\mathbb{A}$ rejects, otherwise it continues the computation.

\item[5.] If the simulation terminates, accept if and only if $\mathbb{A}_1$ terminated on an accepting configuration.
\end{itemize}
$\mathbb{A}_1$ accepts an input if and only if, for some sequence of existential guesses, it terminates on an accepting configuration. If such a sequence of guesses exists, then $\mathbb{A}$ can existentially guess it (via $g_1,\ldots,g_{h(k)}$), as well as the correct oracle answers on this computation path (via $a_1,\ldots,a_{h(k)}$). Now, if a queried instance $((\mathcal{A}_j,\phi_j),|\phi_j|)$ is a 'yes'-instance (and hence $a_j=1$), then
\begin{align*}
\exists u_{1,j,1},\ldots,\exists u_{1,j,m_1}\forall u_{2,j,1},\ldots,\forall u_{2,j,m_2}\ldots Qu_{t,j,1},\ldots,Qu_{t,j,m_t}:&\\
\psi(u_{1,j,1}\textrm{\ mod\ }U_j,\ldots,u_{t,j,m_t}\textrm{\ mod\ }U_j)=1&,
\end{align*}
otherwise
\begin{align*}
\forall u_{2,j,1},\ldots,\forall u_{2,j,m_1}\exists u_{3,j,1},\ldots,\exists u_{3,j,m_2}\ldots Qu_{t+1,j,1},\ldots,Qu_{t+1,j,m_t}:&\\
\psi(u_{2,j,1}\textrm{\ mod\ }U_j,\ldots,u_{t+1,j,m_t}\textrm{\ mod\ }U_j)=0&.
\end{align*}
In other words, once $\mathbb{A}$ has correctly guessed the answers to all oracle queries, it will also correctly evaluate each query instance using alternating nondeterminism, and produce the same outcome as the simulated machine.
\end{proof}

Since, for every $t\geq 1$, the problem used as an oracle in Theorem \ref{ah_oracl_thm} is complete for $\mathbf{A}[t]$, it would be tempting to now state that $\mathbf{A}[1]^{\mathbf{A}[t]}=\mathbf{A}[t+1]$, because this would imply a ``collapse theorem'' for this hierarchy, namely that $\forall t\geq 1:\mathbf{A}[t]=\mathbf{A}[t+1]\Rightarrow (\forall t'\geq t:\mathbf{A}[t]=\mathbf{A}[t'])$. Unfortunately, tail-nondeterminism appears to be too weak for such a collapse theorem to be proved in this fashion. In fact, it is not even certain whether $\mathbf{A}[1]^{\mathbf{FPT}}\subseteq \mathbf{A}[2]$: This is because an $\mathbf{A}[2]$-machine trying to simulate an $\mathbf{A}[1]$-machine that has oracle access to some non-trivial problem in $\mathbf{FPT}$, on some input $(x,k)$, may have to enter the nondeterministic phase of its computation before it even knows the instance to be queried (the simulated machine may write a large instance to its oracle registers, and then nondeterministically make some changes to it before querying the oracle). The size of this instance may depend on $|x|$, and although it can be decided in fpt-time, it may not be possible to decide it in time $h(k)$, for some computable function $h$, even with $2$-alternating nondeterminism. Thus, the property of $p\textsc{-MC}(\Sigma_t[3])$ that, with the right encoding, an instance $((\mathcal{A},\phi),|\phi|)$ can be decided by a $t$-alternating tail-nondeterministic ARAM in time depending computably only on $|\phi|$, appears to have been crucial for our oracle characterization of the $\mathbf{A}$-Hierarchy.
\\\\
The next theorem is the parameterized analogue of a famous classical result of Baker, Gill, and Solovay \cite{bakgs}. The construction of a parameterized oracle $B$ relative to which $\mathbf{FPT}$ and $\mathbf{A}[1]$ differ, is done via diagonalization and uses similar ideas as the classical proof in \cite{bakgs}, but with two noteworthy differences:

First, when diagonalizing against all $\mathbf{FPT}$-machines, we can not computably list all such machines, because the $f(k)$-term in their running times can be any computable function (it is not even possible to computably list a sequence of computable functions such that every computable function is asymptotically dominated by some function in the list). We must therefore proceed more carefully with the construction in order to obtain an oracle that is computable.

Second, when running each RAM on larger and larger inputs for an increasing number of steps while constructing the oracle, we are free to increase \emph{both} the size of the main part of the input \emph{and} the parameter value. Having this additional dimension of the input works in our favor, and allows us to ``kill'' the $f(k)$-term in the running time of any $\mathbf{FPT}$-machine by increasing $|x|$ so that $|x|>f(k)$, at which point we can treat $f(k)|x|^c$ as a polynomial in $|x|$.

\begin{theor}\label{oraclesep_thm}
There exist parameterized oracles $A$ and $B$ such that 
\begin{align*}
\mathbf{FPT}(A)_{tail}=\mathbf{A}[1]^{A}\textrm{\ and\ }\mathbf{A}[1]^{B}\setminus \mathbf{FPT}(B)\neq \emptyset.
\end{align*}
\end{theor}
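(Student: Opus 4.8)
The two oracles are built independently; each construction follows its classical analogue in \cite{bakgs} but exploits the machine characterisations available in the parameterized world. For $A$, the classical ``collapse'' uses a $\mathbf{PSPACE}$-complete set; the parameterized substitute should be a problem that is complete (under fpt-reductions) for a class $\mathbf{C}$ which is closed under complement and ``self-low'' under the relevant form of relativization. The right choice is $\mathbf{C}=\mathbf{AW[P]}$, the class of ARAMs with parameter-bounded (but not necessarily tail-) nondeterminism; such a $\mathbf{C}$ has an fpt-complete problem (e.g.\ a suitably parameterized $\mathbf{AW[P]}$-machine acceptance problem). Fixing $A$ to be any fpt-complete problem for $\mathbf{AW[P]}$, the equality $\mathbf{FPT}(A)_{tail}=\mathbf{A}[1]^{A}$ will follow from the chain
\begin{displaymath}
\mathbf{A}[1]^{A}\ \subseteq\ \mathbf{AW[P]}(A)_{para}\ \subseteq\ \mathbf{AW[P]}\ \subseteq\ \mathbf{FPT}(A)_{tail}\ \subseteq\ \mathbf{A}[1]^{A}.
\end{displaymath}
The first inclusion is immediate, since an $\mathbf{A}[1]^{A}$-machine is tail-nondeterministic with tail-restricted oracle access, hence makes only $h(k)$ guesses and $h(k)$ queries on any path, so it is a special $\mathbf{AW[P]}$-machine with parameter-bounded oracle access. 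The fourth inclusion is $\mathbf{FPT}\subseteq\mathbf{A}[1]$ with matching oracle access, and the third holds because an $\mathbf{FPT}$-machine can compute the fpt-reduction of a given $\mathbf{AW[P]}$-problem to $A$ deterministically -- writing the query instance onto its oracle registers as it goes -- and then issue one query as its last instruction.

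\textbf{The substantive inclusion, and why $\mathbf{AW[P]}$ and not $\mathbf{AW[*]}$.} The content is $\mathbf{AW[P]}(A)_{para}\subseteq\mathbf{AW[P]}$, which I would prove exactly along the lines of the ``$\subseteq$''-direction of the proof of Theorem \ref{ah_oracl_thm}: an oracle-free $\mathbf{AW[P]}$-machine simulates the given machine but first guesses existentially the answers to all of its (at most $h(k)$) queries, together with, for each query, a block of witness strings for the $\mathbf{AW[P]}$-computation deciding that queried instance -- witnesses for ``$A$'' when the guessed answer is $1$, for ``$\overline{A}$'' (again in $\mathbf{AW[P]}$, since the class is closed under complement) when it is $0$ -- and then verifies everything, hoisting all guesses to the front and merging the per-query quantifier patterns in the standard way. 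Balanced access bounds each queried parameter by $g(k)$, so each verification costs fpt time and at most $h_A(g(k))$ alternations and guesses; summing over the $\le h(k)$ queries keeps the number of guesses parameter-bounded and the time fpt. The reason $\mathbf{AW[*]}$ cannot replace $\mathbf{AW[P]}$ is precisely the obstruction described in the remark after Theorem \ref{ah_oracl_thm}: an $\mathbf{AW[*]}$-machine would have to verify queried instances \emph{within the last $h(k)$ steps} of its computation, but those instances can have size depending on $|x|$, whereas with parameter-bounded nondeterminism the verification guesses may be spread over the whole fpt-length computation.

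\textbf{The separating oracle $B$.} I would take
\begin{displaymath}
L_B=\bigl\{(1^n,k)\ :\ \exists\,z_1,\ldots,z_k<n\ \text{with}\ ((z_1,\ldots,z_k,1^n),k)\in B\bigr\},
\end{displaymath}
where $1^n$ denotes a main part of size $n$. For \emph{every} $B$ this lies in $\mathbf{A}[1]^{B}$: a tail-nondeterministic RAM writes the fixed part $1^n,k$ onto its oracle registers during its deterministic phase, then existentially guesses $z_1,\ldots,z_k$ ($k$ guess instructions, each trimmed modulo $n$) and issues one query -- the ``secret'' set for $(1^n,k)$ has only $n^k$ elements, few enough to be guessed in the tail, and the query parameter equals $k$, so the access is balanced and tail-restricted. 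To force $L_B\notin\mathbf{FPT}(B)$ (with \emph{unrestricted} oracle access) I would diagonalize against all oracle RAM programs $\mathbb{M}_1,\mathbb{M}_2,\ldots$. The key observation, anticipated in the discussion preceding the theorem, is that \emph{fixing} the parameter at a value $k_0$ collapses the running time $f(k)|x|^{c}$ of an fpt-machine to $f(k_0)|x|^{c}\le|x|^{c+1}$ as soon as $|x|$ exceeds the constant $f(k_0)$, i.e.\ to an honest polynomial bound. So, given $\mathbb{M}_i$, I would hunt for a fresh instance $(1^n,k_0)$ with $k_0>c_i$ and $n$ much larger than $f_i(k_0)$: on such an input $\mathbb{M}_i$ makes fewer than $n^{k_0}$ queries, so some secret $\sigma$ of the block $(1^n,k_0)$ is never queried, and, after simulating $\mathbb{M}_i^{B}$ there, I put $\sigma$ into $B$ exactly when $\mathbb{M}_i$ rejects, thereby refuting $\mathbb{M}_i$ on that instance. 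As $c_i$ is unknown I sweep $k_0=1,2,3,\ldots$, and as $f_i(k_0)$ is unknown I sweep larger and larger $n$ and larger and larger simulation budgets.

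\textbf{The main obstacle.} The delicate point -- and the reason this needs more care than \cite{bakgs} -- is that the class of fpt-machines is not effectively enumerable: the function $f$ in the running time may be an arbitrary computable function, and no computable function dominates all of them, so one cannot run each machine for a predetermined number of steps and be sure of seeing it halt. I would handle this with a dovetailed schedule of ``attempts'': the $\ell$-th attempt uses a fresh main size $n_\ell$, larger than $n_{\ell-1}$ and than the previous step budget, and a rapidly growing step budget (say $T_\ell=n_\ell^{\ell}$), after which $n_{\ell+1}$ is chosen larger than $T_\ell$; an attempt whose simulation fails to halt within its budget is simply \emph{abandoned} (its entire secret block committed ``out'') and the requirement is retried later with a larger $k_0$, a larger $n$, and a larger budget. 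The inequality $T_\ell<n_{\ell+1}$ ensures both that actions on later attempts never disturb already-resolved requirements and that $\mathbb{M}_i$, run on the $\ell$-th input, can only touch secret blocks $1,\ldots,\ell$ -- so the oracle stays consistent -- while the unbounded growth of the budgets, against the finiteness of $f_i(k_0)$ and of the needed $n$ for each fixed machine and fixed $k_0$, guarantees that every fpt-machine is eventually defeated; and since every secret block is fully committed within finitely many stages of being used, $B$ is computable. What remains are routine verifications: that $L_B$ is genuinely decided by the $\mathbf{A}[1]^{B}$-machine above (its trimmed guesses range over exactly $\{0,\ldots,n-1\}^{k}$), that this machine is parameter-restricted with balanced, tail-restricted oracle access, and that the four displayed inclusions for $A$ respect the stated resource bounds.
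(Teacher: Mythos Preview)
Your proposal is correct, and for the separating oracle $B$ it follows the paper's diagonalization almost exactly (the paper's ``secret'' instances differ from $1^n$ in at most $k$ positions rather than having $k$ guessed integers prepended, but this is cosmetic). One point to keep sharp in your dovetail: it is essential that each pair $(\mathbb{M}_i,k_0)$ be revisited with \emph{arbitrarily large} $n$, not merely that $k_0$ increase at every failed attempt---otherwise a machine whose $f_i$ grows faster than your $n$-schedule would never be caught. The paper enforces this via property (iii) of the functions $K_i$; your phrase ``as $f_i(k_0)$ is unknown I sweep larger and larger $n$'' has the right intent, but ``retried later with a larger $k_0$'' could be misread.

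For the collapsing oracle $A$ your route is genuinely different from the paper's. The paper takes $A$ to be a concrete ``universal'' problem, \textsc{XP-RAM-Computation} (decide whether a given RAM accepts a given input within $n^{k'}+k'$ steps), and argues directly: the whole computation of an $\mathbf{A}[1]^{A}$-machine---cycling through all $(f(k)p(|x|))^{h(k)}$ guess-tuples and simulating every query for the prescribed number of steps---is itself a deterministic RAM computation whose step count has the shape $n^{k'}+k'$ for $n=\poly(|x|)$ and $k'$ computable in $k$, so a deterministic machine can wrap it into a single tail query to $A$. Your approach instead takes $A$ to be $\mathbf{AW[P]}$-complete and closes the loop through a self-lowness statement. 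This works, and is more structural, but two remarks: (i) the inclusion you actually need is only $\mathbf{A}[1]^{A}\subseteq\mathbf{AW[P]}$, and for that the ``hoist all oracle answers and witness blocks to the front'' argument you sketch (mirroring the $\subseteq$-direction of Theorem~\ref{ah_oracl_thm}) is valid precisely because the base machine is $1$-alternating---for a general $\mathbf{AW[P]}$ base machine with universal guesses preceding queries, the oracle answers cannot be hoisted existentially past those universals, and one must instead verify each query inline; (ii) the paper's approach avoids this subtlety entirely and needs no closure properties of $\mathbf{AW[P]}$, at the price of a bespoke oracle.
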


\begin{proof}
Consider the following parameterized problem:
\begin{center}
\fbox{
\begin{minipage}{12cm}
\textsc{XP-RAM-Computation}\\
\begin{tabular}{ r l }
Input: & A RAM $\mathbb{A}$, an input $(x,k)$, and $n\in\mathbb{N}$ in unary. \\
Parameter: & $k'\in\mathbb{N}$.\\
Problem: & \parbox[t]{9cm}{Decide whether $\mathbb{A}$ accepts the input $(x,k)$ in at most $n^{k'}+k'$ steps.}
\end{tabular}
\end{minipage}}
\end{center}

Let $A=\textsc{XP-RAM-Computation}$. Evidently, $\mathbf{FPT}(A)_{tail}\subseteq\mathbf{A}[1]^{A}$. Let $\mathbb{A}$ be an oracle $\mathbf{A}[1]$-machine, and let $f,g,h$ be computable functions, and $p$ a polynomial function, such that on input $(x,k)$, $\mathbb{A}$ runs in time $f(k)p(|x|)$, executes nondeterministic or oracle query instructions only among the last $h(k)$ steps of any computation, and queries the oracle only for instances with parameter value $\leq g(k)$. Then the problem decided by $\mathbb{A}$ with oracle $A$ can be decided deterministically by cycling through all $(f(k)p(|x|))^{h(k)}$ sets of guessed integers, simulating the computation of $\mathbb{A}$ for each set of guesses, and simulating, for every query of $\mathbb{A}$ in every computation, the RAM described in the query instance for at most $(f(k)p(|x|))^{g(k)}+g(k)$ steps. A (somewhat tedious) calculation shows that the total number of steps performed by such a simulation can be upper-bounded by $n^{k'}+k'$, for a suitable $n=\poly(|x|)$ and $k'$ that depends computably on $k$ alone, so an $\mathbf{FPT}$-machine with oracle access to $A$ can query the oracle for the deterministic RAM performing the above computation, and thus decide the same problem as $\mathbb{A}$ with oracle $A$.

We proceed with the proof of the second statement. Given any parameterized problem $B$, let $\textsc{1}_B$ be the parameterized problem defined as follows: the instance $(1^n,k)$ is in $\textsc{1}_B$ if and only if there exists an instance $((x_1,\ldots ,x_n),k)$ in $B$, with $x_i\in[n]$ for all $i\in[n]$, such that $x_i\neq 1$ for at most $k$ values $i\in[n]$.

We first describe an $\mathbf{A}[1]$-machine $\mathbb{A}$ that, if given oracle access to $B$, decides $\textsc{1}_B$, for any $B$. $\mathbb{A}$ starts by writing the instance $(1^n,k)$ to its oracle registers, and then enters the nondeterministic phase of its computation. It now guesses $2k$ integers $p_1,\ldots,p_k,i_1,\ldots,i_k\in [n]$, and then modifies the instance encoded in its oracle registers so that, for all $j\in[k]$, the $p_j$-th $1$ in the main part of the instance is changed to $i_j$ ($\leq k$ changes in total). Finally, the machine queries the $B$-oracle, and accepts if and only if the answer is 'yes'.

Next, we define a specific parameterized problem $B$ in such a way that no (deterministic) parameter-restricted RAM with oracle access to $B$ decides $\textsc{1}_B$, because every such machine gives the wrong answer on some input.

Let $(\mathbb{M}_i)_{i\in\mathbb{N}}$ be a sequence of all valid programs of oracle RAMs. Let $l:\mathbb{N}\to\mathbb{N}$ be such that for every $t\in\mathbb{N}$, $|l^{-1}(t)|=\infty$. Let $(K_{i})_{i\in\mathbb{N}}$ be a sequence of functions, with $K_i:\mathbb{N}\to\mathbb{N}$ for all $i$, having the following properties: (i) every $K_i$ has an infinite range; (ii) every positive integer appears in the range of at most one $K_i$; (iii) for every $t\in\mathbb{N}$ such that $K_i^{-1}(t)\neq\emptyset$, $|K_i^{-1}(t)|=\infty$. It is easy to construct an example of a computable function and a computable sequence of functions with the above properties.

Throughout the construction, we will keep track of and use the values $n,j_1,j_2,\ldots$ (initially all are equal to $0$): At stage $i\in\mathbb{N}$, the construction consists of the following steps:
\begin{itemize}
\item[1.] Set $j_{l(i)}:=j_{l(i)}+1$; set $k:=K_{l(i)}(j_{l(i)})$; set $n:=\max\{n,k\}+1$.

\item[2.] Simulate the machine $\mathbb{M}_{l(i)}$ for $n^{k}$ steps on input $(1^n,k)$. On all queries made during the simulation, answer 'no' if an instance has not been queried during any simulated computation up to this point, otherwise answer consistently with previous answers.

\item[3.] If $\mathbb{M}_{l(i)}$ terminates during the simulation and rejects, add to $B$ some instance $((x_1,\ldots,x_{n}),k)$ such that $x_j\in[n]$ for all $j$, $x_j\neq 1$ for at most $k$ values of $j$, and such that the instance has never been queried during any simulation up to this point in the construction. If the machine terminates and accepts, or does not terminate in the required number of steps, do nothing (so that no instances $((x_1,\ldots,x_{n}),k)$ are in $B$, and hence $(1^n,k)\notin\textsc{1}_B$).

\item[4.] Let $n'$ be the largest integer such that a query of the form $((x_{1},\ldots,x_{n'}),k')$, with $x_j\in[n']$ for all $j$,  was made during the simulation at this stage. Set $n:=\max\{n,n'\}$.
\end{itemize}

Let $\mathbb{M}_{l(i)}$ be a deterministic oracle RAM such that, if $\mathbb{M}_{l(i)}$ has access to an oracle for $B$, then, for some computable function $f$ and constant $c\geq 0$, the machine halts on any input $(1^n,k)$ ($n>k$) after at most $f(k)n^c$ steps. By the property of $l$ that it takes every value in its range infinitely many times, it follows that $\mathbb{M}_{l(i)}$ will be run on infinitely many inputs. By property (i) of $(K_j)_{j\in\mathbb{N}}$, $\mathbb{M}_{l(i)}$ will be run on inputs with arbitrarily large parameter values $k$. Thus we may assume that $k>c+1$. By property (iii) of $(K_j)_{j\in\mathbb{N}}$, every value that is the parameter value of an input on which $\mathbb{M}_{l(i)}$ is run, is the parameter value for inputs of $\mathbb{M}_{l(i)}$ at infinitely many stages. Thus we may assume that the parameter value $k>c+1$ is fixed for the machine $\mathbb{M}_{l(i)}$. Finally, because the number $n$ is increased for every simulation, it follows that $\mathbb{M}_{l(i)}$ will be run on inputs with parameter value $k$ and arbitrarily large numbers of ones in the first part of the input. Thus we may assume that $n>f(k)+1$ (since $k$ is now fixed), and hence that $f(k)n^c<(n-1)^k$.

For the input $(1^n,k)$, there are more than $(n-1)^k$ instances $((x_1,\ldots,x_n),k)$ such that $x_j\in[n]$ for all $j$ and $x_j\neq 1$ for exactly $k$ values of $j$, whose membership in $B$ could cause $(1^n,k)$ to be in $\textsc{1}_B$. Since at the beginning of every simulation, $n$ is set so that no instances $((x_1,\ldots,x_n),k)$ have been queried in any previous simulation, it follows that on input $(1^n,k)$, $\mathbb{M}_{l(i)}$ will terminate (because $f(k)n^c<n^k$), and at the end of the simulation, some instances whose membership in $B$ could cause the input to be in $\textsc{1}_B$ will never have been queried. We can therefore ensure that $\mathbb{M}_{l(i)}$ gives the wrong answer on input $(1^n,k)$ (for $n$ and $k$ chosen as above), by either placing an unqueried instance of the right form into $B$ (if the machine's output is `no') or by placing no such instance into $B$ (if the machine's output is `yes'). Since we can make every machine be wrong on some input in this way, we may conclude that $\textsc{1}_B\notin \mathbf{FPT}(B)$.
\end{proof}

\noindent\emph{Relativization results for random access machines with parameter-bounded nondeterminism.}

For this machine model, we first need to define the analogue of the Polynomial Hierarchy.

\begin{defin}
For each $t\geq 1$, let $\mathbf{\Sigma}^{[P]}_t$ be the class of parameterized problems that can be decided by a parameter-restricted $t$-alternating ARAM $\mathbb{A}$ such that, for some computable function $h$, on any input $(x,k)$, $\mathbb{A}$ executes at most $h(k)$ nondeterministic instructions on any computation path. Furthermore, we define $\mathbf{W[P]H}:=\bigcup_{t=1}^\infty \mathbf{\Sigma}^{[P]}_t$.
\end{defin}
Clearly, $\mathbf{W[P]}=\mathbf{\Sigma}^{[P]}_1\subseteq\mathbf{W[P]H}\subseteq\mathbf{AW[P]}$. For $t\geq 2$, $\mathbf{\Sigma}^{[P]}_{t}$-complete problems can be obtained by modifying known $\mathbf{W[P]}$- or $\mathbf{AW[P]}$-complete problems appropriately (see \cite{cfg1} or \cite{fg}).

We turn to the oracle characterization of this hierarchy. Since a $\mathbf{W[P]}$-machine can compute fpt-reductions at any point in the computation, the choice of the complete problem given as an oracle is no longer important.  Now the proof of the theorem proceeds in the same way as the characterization of $\mathbf{PH}$ in terms of oracle machines (see \cite{ab}, Section 5.5), but note that for the ``$\subseteq$''-inclusion, the restrictions on the oracle access are nevertheless essential: balanced access ensures that the $\mathbf{\Sigma}^{[P]}_{t+1}$-machine can nondeterministically decide the instances queried by the oracle machine, and parameter-bounded access ensures that the number of queries made by the oracle machine is not too large for a $\mathbf{\Sigma}^{[P]}_{t+1}$-machine to simulate.
\begin{theor}\label{wp_oracl_thm}
For each $t\geq 1$, we have $\mathbf{W[P]}^{\mathbf{\Sigma}^{[P]}_t}=\mathbf{\Sigma}^{[P]}_{t+1}$.
\end{theor}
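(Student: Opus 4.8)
The plan is to transfer the classical argument that $\mathbf{NP}^{\mathbf{\Sigma}_t^P}=\mathbf{\Sigma}_{t+1}^P$ (\cite{ab}, Section~5.5) to this setting, following the same template as the proof of Theorem~\ref{ah_oracl_thm} but \emph{without} the complications caused by tail-nondeterminism. A preliminary observation: since a $\mathbf{W[P]}$-machine may compute an fpt-reduction at any point in its computation, and an fpt-reduction applied to a query instance of parameter $\le g(k)$ again yields an instance whose parameter is bounded by a computable function of $k$, it suffices to prove $\mathbf{W[P]}^{O}\supseteq\mathbf{\Sigma}^{[P]}_{t+1}$ for \emph{one} $\mathbf{\Sigma}^{[P]}_t$-complete problem $O$ (such problems exist, as noted after the definition of $\mathbf{W[P]H}$) and $\mathbf{W[P]}(O)_{para}\subseteq\mathbf{\Sigma}^{[P]}_{t+1}$ for \emph{every} $O\in\mathbf{\Sigma}^{[P]}_t$; the statement for the class superscript (or for any other complete oracle) then follows.

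For ``$\supseteq$'', let $Q\in\mathbf{\Sigma}^{[P]}_{t+1}$ be decided by a parameter-restricted $(t+1)$-alternating ARAM $\mathbb{A}$ with at most $h(k)$ guesses on any path and running time (and register bound) $N:=f(k)p(|x|)$. A $\mathbf{W[P]}$-machine $\mathbb{A}_1$ would simulate $\mathbb{A}$ deterministically through its initial deterministic segment and then through its first (existential) phase, using its own $\le h(k)$ existential guesses for the integers guessed there by $\mathbb{A}$. When $\mathbb{A}$ is about to execute its first universal guess, its configuration $C$ is completely determined, and the predicate ``$\mathbb{A}$ accepts starting from $C$, within $N$ further steps and at most $k'$ further guesses'' can be written, after swapping EXISTS with FORALL in $\mathbb{A}$ and flipping its output (the device used in Theorem~\ref{ah_oracl_thm}), as the negation of a $\mathbf{\Sigma}^{[P]}_t$-predicate of the instance $(C,1^{N},k')$, whose parameter $k'\le h(k)$ is bounded; here $1^{N}$ also serves as a cap on register values, which is never exceeded because $\mathbb{A}$ is parameter-restricted. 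Thus $\mathbb{A}_1$ can fpt-reduce that $\mathbf{\Sigma}^{[P]}_t$-predicate to $O$, write the result to its oracle registers, make a single query, and accept iff the answer is ``no''; this is a legal $\mathbf{W[P]}$-machine with parameter-bounded access to $O$, and the acceptance argument is identical to the one in Theorem~\ref{ah_oracl_thm}.

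For ``$\subseteq$'', let $Q$ be decided by a $\mathbf{W[P]}$-machine $\mathbb{A}_1$ with parameter-bounded access to some $O\in\mathbf{\Sigma}^{[P]}_t$ --- at most $h(k)$ queries per path, each of parameter $\le g(k)$ --- and fix a parameter-restricted $t$-alternating ARAM $\mathbb{M}_O$ deciding $O$ with at most $h_O(\cdot)$ guesses. I would build a $(t+1)$-alternating ARAM $\mathbb{A}$ that simulates $\mathbb{A}_1$ deterministically up to its first guess-or-query instruction and then, exactly as in the ``$\subseteq$''-part of Theorem~\ref{ah_oracl_thm}, guesses in an existential first phase the values $\mathbb{A}_1$ will guess, the oracle answers $a_1,\dots,a_{h(k)}\in\{0,1\}$, and a first existential ``witness block'' for each of the $h(k)$ forthcoming query-verifications; and in phases $2,\dots,t+1$ the successive alternating witness blocks for all $h(k)$ verifications (each block having at most $h(k)\cdot h_O(g(k))$ integers, plus $2h(k)$ more in phase~$1$ --- a number bounded by a computable function of $k$). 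Then $\mathbb{A}$ resumes the simulation of $\mathbb{A}_1$, substituting its pre-guessed values for $\mathbb{A}_1$'s guesses, and, on the $j$-th query, reconstructs the queried instance $I_j$ and checks that the alternating evaluation of $\mathbb{M}_O$ on $I_j$ --- using, reduced modulo the relevant ranges, the phase-$1,\dots,t$ witness blocks of query $j$ when $a_j=1$ and the phase-$2,\dots,t+1$ blocks when $a_j=0$ --- agrees with $a_j$; if not, $\mathbb{A}$ rejects. Finally $\mathbb{A}$ accepts iff $\mathbb{A}_1$ does. The offset by one in the $a_j=0$ case is exactly what keeps $\mathbb{A}$ at $t+1$ alternations, and correctness is the standard argument: once the existentially guessed $a_1,\dots,a_{h(k)}$ are the true oracle answers along $\mathbb{A}_1$'s computation path, each $I_j$ is correctly evaluated by its alternating verification, because $O\in\mathbf{\Sigma}^{[P]}_t$.

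The step I expect to require the most care is the parameterized accounting in the ``$\subseteq$'' direction: one must check that the total number of guess instructions $\mathbb{A}$ makes on any path is bounded by a computable function of $k$ --- which uses parameter-bounded oracle access to bound the \emph{number} of queries and balanced access to bound each query's parameter by $g(k)$, so that $h_O(g(k))$ is a legitimate per-query guess budget --- and that $\mathbb{A}$ can, in fpt-time, reconstruct each $I_j$ (whose size is polynomially bounded in $|x|$, since $\mathbb{A}_1$ is parameter-restricted) and perform the plug-in simulation of $\mathbb{M}_O$ on it (which is fpt-time because $\mathbb{M}_O$ is parameter-restricted and $I_j$ has parameter $\le g(k)$). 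These are exactly the two roles of the oracle-access restrictions highlighted just before the theorem statement; the remaining points --- that the instance written to the oracle registers in the ``$\supseteq$'' direction has the right form, and that the predicates invoked really lie in the claimed classes --- are routine.
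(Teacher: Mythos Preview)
Your proposal is correct and follows essentially the same approach as the paper: the paper's own ``proof'' is just the paragraph preceding the theorem, which points to the classical $\mathbf{NP}^{\mathbf{\Sigma}_t^P}=\mathbf{\Sigma}_{t+1}^P$ argument, notes that fpt-reductions can be computed anywhere in a $\mathbf{W[P]}$-computation (so the choice of complete oracle is immaterial), and singles out exactly the two roles of parameter-bounded and balanced access that you identify for the ``$\subseteq$'' direction. Your write-up simply fills in the details along the template of Theorem~\ref{ah_oracl_thm}, which is precisely what the paper intends.
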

\begin{corol}
For any $t,u\geq 1$, if $\mathbf{\Sigma}^{[P]}_t=\mathbf{\Sigma}^{[P]}_{t+u}$, then $\mathbf{W[P]H}=\mathbf{\Sigma}^{[P]}_t$.
\end{corol}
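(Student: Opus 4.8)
The plan is to reproduce the standard argument that a collapse of a polynomial-hierarchy-like structure at one level propagates upward, using Theorem~\ref{wp_oracl_thm} as the sole engine.

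First I would reduce to the case $u=1$. The hierarchy is monotone, $\mathbf{\Sigma}^{[P]}_s\subseteq\mathbf{\Sigma}^{[P]}_{s+1}$ for every $s\geq 1$, since an $s$-alternating parameter-restricted ARAM can be turned into an $(s+1)$-alternating one by appending a trailing block of (ignored) nondeterministic guesses, which changes neither the parameter-restricted running time, nor the bound on the number of nondeterministic instructions, nor the accepted language. Consequently $\mathbf{\Sigma}^{[P]}_t=\mathbf{\Sigma}^{[P]}_{t+u}$ forces the whole chain $\mathbf{\Sigma}^{[P]}_t\subseteq\mathbf{\Sigma}^{[P]}_{t+1}\subseteq\cdots\subseteq\mathbf{\Sigma}^{[P]}_{t+u}$ to consist of a single class; in particular $\mathbf{\Sigma}^{[P]}_t=\mathbf{\Sigma}^{[P]}_{t+1}$.

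Next I would prove by induction on $s\geq t$ that $\mathbf{\Sigma}^{[P]}_s=\mathbf{\Sigma}^{[P]}_t$. The cases $s=t$ and $s=t+1$ are immediate from the previous paragraph. For the inductive step, assume $\mathbf{\Sigma}^{[P]}_s=\mathbf{\Sigma}^{[P]}_t$ with $s\geq t+1$. By Theorem~\ref{wp_oracl_thm} applied at levels $s$ and $t$,
\[
\mathbf{\Sigma}^{[P]}_{s+1}=\mathbf{W[P]}^{\mathbf{\Sigma}^{[P]}_s}=\mathbf{W[P]}^{\mathbf{\Sigma}^{[P]}_t}=\mathbf{\Sigma}^{[P]}_{t+1}=\mathbf{\Sigma}^{[P]}_t ,
\]
where the middle equality uses that $\mathbf{\Sigma}^{[P]}_s$ and $\mathbf{\Sigma}^{[P]}_t$ are literally the same class of problems (so a complete problem for one is a complete problem for the other), together with the observation preceding Theorem~\ref{wp_oracl_thm} that $\mathbf{W[P]}^{(\cdot)}$ does not depend on which complete problem of the oracle class is used. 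Finally, $\mathbf{W[P]H}=\bigcup_{s\geq 1}\mathbf{\Sigma}^{[P]}_s$; the terms with $s<t$ are contained in $\mathbf{\Sigma}^{[P]}_t$ by monotonicity, and the terms with $s\geq t$ equal $\mathbf{\Sigma}^{[P]}_t$ by the induction, so the union collapses to $\mathbf{\Sigma}^{[P]}_t$.

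I do not expect a real obstacle here: essentially all of the content is already in Theorem~\ref{wp_oracl_thm}. The only points requiring a line of care are the monotonicity of the hierarchy and the fact that each $\mathbf{\Sigma}^{[P]}_t$ is closed under fpt-reductions and possesses a complete problem, so that the relativized classes $\mathbf{W[P]}^{\mathbf{\Sigma}^{[P]}_t}$ are well defined and invariant under the choice of oracle — but this is exactly what is recorded in the remarks accompanying the definition of $\mathbf{\Sigma}^{[P]}_t$ and in the discussion preceding Theorem~\ref{wp_oracl_thm}.
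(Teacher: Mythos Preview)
Your argument is correct and is exactly the standard collapse argument the paper has in mind: the corollary is stated without proof precisely because it follows from Theorem~\ref{wp_oracl_thm} by the chain $\mathbf{\Sigma}^{[P]}_{s+1}=\mathbf{W[P]}^{\mathbf{\Sigma}^{[P]}_s}=\mathbf{W[P]}^{\mathbf{\Sigma}^{[P]}_t}=\mathbf{\Sigma}^{[P]}_{t+1}=\mathbf{\Sigma}^{[P]}_t$ together with monotonicity of the hierarchy. One cosmetic remark: you do not even need to append dummy guesses for monotonicity, since ``$t$-alternating'' here means at most $t$ alternation blocks starting existentially, so any $s$-alternating machine is already $(s+1)$-alternating.
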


\noindent Finally, we have the oracle separation result for this machine model, as in \cite{bakgs}:
\begin{theor}\label{wp_oraclesep_thm}
There exist parameterized oracles $A$ and $B$ such that
\begin{displaymath}
\mathbf{FPT}(A)_{para}=\mathbf{W[P]}^{A}\textrm{\ \ and\ \ }\mathbf{W[P]}^{B}\setminus\mathbf{FPT}(B)\neq\emptyset.
\end{displaymath}
\end{theor}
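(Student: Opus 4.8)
The plan is to mirror the proof of Theorem~\ref{oraclesep_thm} almost step for step, reusing the same two oracles and observing that each machine construction carried out there is already legal in the model with parameter-bounded nondeterminism. For the collapsing oracle I would take $A=\textsc{XP-RAM-Computation}$ again. The inclusion $\mathbf{FPT}(A)_{para}\subseteq\mathbf{W[P]}^{A}$ is immediate, so the content is the converse. Let $\mathbb{A}$ be a $\mathbf{W[P]}$-machine with parameter-bounded access to $A$, running in time $f(k)p(|x|)$, making at most $h(k)$ existential guesses and at most $h(k)$ queries on any computation path, each query of parameter value $\leq g(k)$. Exactly as in Theorem~\ref{oraclesep_thm}, the language of $\mathbb{A}$ relative to $A$ is decided deterministically by cycling over all $(f(k)p(|x|))^{h(k)}$ guess sequences, simulating $\mathbb{A}$ for each, and resolving every query by directly simulating the oracle-free RAM named in the query instance for its $\leq(f(k)p(|x|))^{g(k)}+g(k)$ steps. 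The total step count is of the form $n_0^{k_0}+k_0$ for some $n_0$ writable in fpt-time and some $k_0$ depending computably on $k$ alone (the same ``somewhat tedious'' estimate as before). Hence an $\mathbf{FPT}$-machine that issues a single $A$-query about this deterministic RAM --- which is parameter-bounded access --- decides the same language, giving $\mathbf{W[P]}^{A}\subseteq\mathbf{FPT}(A)_{para}$.

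For the separating oracle I would reuse the test language $\textsc{1}_B$ and the stage-by-stage construction of $B$ from the proof of Theorem~\ref{oraclesep_thm} verbatim. The only point to recheck on the positive side is that $\textsc{1}_B\in\mathbf{W[P]}^{B}$ for every oracle $B$: the machine built there writes $(1^n,k)$ to its oracle registers, guesses $p_1,\dots,p_k,i_1,\dots,i_k\in[n]$, makes at most $k$ edits to the stored instance, and queries $B$ once. This machine is $1$-alternating, performs only $2k$ nondeterministic guesses, and makes a single oracle query of parameter $k$; it is therefore also a perfectly valid $\mathbf{W[P]}$-machine whose access to $B$ is balanced and parameter-bounded. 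So the same machine witnesses $\textsc{1}_B\in\mathbf{W[P]}^{B}$.

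Finally, the diagonalization showing $\textsc{1}_B\notin\mathbf{FPT}(B)$ is \emph{literally} the argument from the proof of Theorem~\ref{oraclesep_thm}, since the adversary class $\mathbf{FPT}(B)$ --- deterministic parameter-restricted oracle RAMs with \emph{unrestricted} access to $B$ --- is the same in both theorems: enumerate all oracle RAMs, simulate $\mathbb{M}_{l(i)}$ on $(1^n,k)$ for $n^{k}$ steps with $n$ and $k$ driven upward through the auxiliary functions $l$ and $(K_i)_{i\in\mathbb{N}}$, answer every fresh query ``no'' and otherwise consistently, then flip the outcome by adding, or declining to add, a relevant instance $((x_1,\dots,x_n),k)$. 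The counting step is unchanged: fixing $k>c+1$ and $n>f(k)+1$ for a machine running in $f(k)n^{c}$ steps gives $f(k)n^{c}<(n-1)^{k}$, which is smaller than the number of instances $((x_1,\dots,x_n),k)$ whose membership in $B$ would force $(1^n,k)\in\textsc{1}_B$, so at the end of the stage such an instance is still unqueried and can be used to make $\mathbb{M}_{l(i)}$ wrong on $(1^n,k)$.

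I do not expect a real obstacle: the theorem is essentially a second reading of Theorem~\ref{oraclesep_thm}. The only places demanding care are bookkeeping: confirming that parameter-bounded (rather than tail-restricted) oracle access is precisely what both directions need --- in particular that the $\mathbf{A}[1]$-machine for $\textsc{1}_B$ does fall inside the $\mathbf{W[P]}$ model --- and redoing the running-time estimate in the collapse direction with ``at most $h(k)$ queries'' in place of the tail-window bound. For a polished write-up the cleanest option would be to fold both oracle-separation theorems into a single lemma and prove it once.
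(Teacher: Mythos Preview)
Your proposal is correct and matches the paper's approach exactly: the paper's entire proof is the single sentence ``it suffices to use the same two oracles as in the proof of Theorem~\ref{oraclesep_thm},'' and you have simply spelled out the checks that justify that sentence. The observations you make---that the $\mathbf{A}[1]$-machine for $\textsc{1}_B$ is already a legitimate $\mathbf{W[P]}$-machine with parameter-bounded access, that the deterministic simulation for the collapse direction only needs a bound on the \emph{number} of queries (not on when they occur), and that the diagonalization against $\mathbf{FPT}(B)$ is literally unchanged---are precisely the points the paper leaves implicit.
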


\noindent For the proof, it suffices to use the same two oracles as in the proof of Theorem \ref{oraclesep_thm}.

\section{Interactive proof systems for parameterized complexity classes}\label{ip_sec}
A classical interactive proof system consists of a verifier and a prover who exchange messages in order for the verifier to decide whether a given input is a `yes'-instance of a problem. The verifier is a probabilistic TM, meaning that he can guess random bits, but his computation throughout the entire interaction is time-bounded polynomially in terms of the size of the input instance (and therefore so is the length of the messages he can send or receive). The prover is computationally all-powerful, but he only sees the input and the messages sent by the verifier (not the verifier's random bits), and his goal is to convince the verifier to accept. A proof system is said to decide a problem $Q$ if every $x\in Q$ is accepted by the verifier with probability (over the verifier's random bits) $\geq 2/3$ for some prover, and every $x\notin Q$ is accepted by the verifier with probability $\leq 1/3$ for every prover. See \cite{ab}, Chap. 8, for the formal definitions.

Here we make a slight change to this definition, in order to apply the concept in the parameterized setting, by letting the verifier be a probabilistic RAM (meaning that he can guess non-negative integers of bounded size in a single step), and allowing the messages between verifier and prover to be strings of non-negative integers of size bounded in terms of the size of the input and the parameter. This change does not affect the (classical) class $\mathbf{IP}$ (see Remark \ref{input_rem}), but allows us to apply separate bounds to different aspects of the proof systems.

Aside from the number of rounds, proof systems in the class $\mathbf{IP}$ have a number of other parameters that are implicitly bounded polynomially in the size of the instance, due to the requirement that the verifier be a polynomial-time machine. Among them are: the number of random guesses that the verifier can make, the length of the messages that the prover can send to the verifier, and the length of the verifier's computations between two messages. When considering interactive proof systems for parameterized problems, any one of these numbers can be bounded either computably in the parameter value of the instance, or have a bound of the form $f(k)p(|x|)$, where $f$ is a computable function and $p$ a polynomial function.\\

\noindent\emph{Arithmetization of first-order formulas with relational vocabularies.}

Before we can give interactive proof systems for parameterized complexity classes, we need to adapt the main technical tool used in such results, namely arithmetization.

Let $\mathcal{A}$ be a relational structure with universe $A=\{0,\ldots,u\}$ ($u\geq 1$), and let $\phi=$\\$\exists x_1\forall x_2\ldots Qx_k\psi(x_1,\ldots,x_k)$ be a first-order formula of the same vocabulary as $\mathcal{A}$, where $\psi$ is quantifier-free. Let $q$ be the smallest prime between $u+1$ and $2(u+1)$. We identify $A$ with a subset of $GF(q)$ in the obvious way. We show how to define a multivariate polynomial $P_{\mathcal{A},\psi}\in GF(q)[X_1,\ldots,X_k]$ such that $\forall(a_1,\ldots,a_k)\in A^k$:  $P_{\mathcal{A},\psi}(a_1,\ldots,a_k)=\psi(a_1,\ldots,a_k)$, and how to define operations $\exists X_i$ and $\forall X_i$ such that $\exists X_1\forall X_2\ldots QX_k P_{\mathcal{A},\psi}=\phi$.

We start with the atomic formulas and define $Eq(X,Y):=1-(X-Y)^{q-1}$. By Fermat's Little Theorem, $(x-y)^{q-1}=1$ (in $GF(q)$) whenever $x-y\in GF(q)\setminus\{0\}$. Thus, $Eq(x,y)=1$ if $x=y$, and $Eq(x,y)=0$ if $x\neq y$, for all $x,y\in GF(q)$. The relations in $\mathcal{A}$ can similarly be transformed into polynomials, and we illustrate this for a ternary relation $T$: Define
\begin{displaymath}
P_T(X,Y,Z):=\sum_{(u,v,w)\in T}(1-(X-u)^{q-1})(1-(Y-v)^{q-1})(1-(Z-w)^{q-1}).
\end{displaymath}
By the same reasoning as above, we have that $P_T(u,v,w)=T(u,v,w)$ for all $u,v,w\in A$. The degree of both $Eq$ and $P_T$ in each of their variables is $q-1$, but it could happen that $T$ appears in an FO formula, for example, as ``$Txxz$'', in which case using the above representation would result in a polynomial expression whose degree in $X$ is greater than $q-1$. But such an increase of the degree is unnecessary, because the following polynomial also represents the atomic formula $Txxz$: $P_T(X,Z):=\sum_{(u,u,w)\in T}(1-(X-u)^{q-1})(1-(Z-w)^{q-1})$. We may therefore assume that polynomial representations of relations of any arity have degree $q-1$ in each of their variables.

Let $\alpha=\alpha(x_1,\ldots,x_l)$ and $\beta$ be (not necessarily quantifier-free) formulas with the same vocabulary as $\mathcal{A}$, for which polynomials $P_{\mathcal{A},\alpha}$ and $P_{\mathcal{A},\beta}$ have been defined. Then we associate the polynomial $P_{\mathcal{A},\alpha}\cdot P_{\mathcal{A},\beta}$ with $\alpha\wedge\beta$, $P_{\mathcal{A},\alpha}+ P_{\mathcal{A},\beta} - P_{\mathcal{A},\alpha}\cdot P_{\mathcal{A},\beta}$ with $\alpha\vee\beta$, and $1-P_{\mathcal{A},\alpha}$ with $\neg \alpha$. Furthermore, we associate with the formulas $\forall x_i\alpha(x_1,\ldots,x_l)$ and $\exists x_i\alpha(x_1,\ldots,x_l)$, respectively, the polynomials
\begin{align*}
\forall X_iP_{\mathcal{A},\alpha}(X_1,\ldots,X_l):=&\prod_{z=0}^{u}P_{\mathcal{A},\alpha}(X_1,\ldots,X_{i-1},z,X_{i+1},\ldots,X_l)\textrm{\ and}\\
\exists X_iP_{\mathcal{A},\alpha}(X_1,\ldots,X_l):=&1-\prod_{z=0}^{u}(1-P_{\mathcal{A},\alpha}(X_1,\ldots,X_{i-1},z,X_{i+1},\ldots,X_l)).
\end{align*}
Since the definitions of the $\forall$ and $\exists$ operations for polynomials involve products, the degree of the resulting polynomial in each variable can be larger by a factor of $u$ than that of the polynomial that the operation is applied to. However, when evaluating such a polynomial on values in $GF(q)$, we have, again by Fermat's Little Theorem, that $x^r=x^{r-q+1}$ for all $r\geq q$. It is therefore possible to reduce the degree of a polynomial in each variable to at most $q-1$, without changing its value on any tuple of elements from $GF(q)$. We define a \emph{degree reduction operation} that produces a polynomial whose degree in one variable is exactly $q-1$, and which coincides with the original polynomial on all tuples of values from $A$.
\begin{displaymath}
\R X_iP_{\mathcal{A},\alpha}(X_1,\ldots,X_l):=\sum_{z=0}^{u}(1-(X_i-z)^{q-1})P_{\mathcal{A},\alpha}(X_1,\ldots,X_{i-1},z,X_{i+1},\ldots,X_l).
\end{displaymath}

We will want to evaluate the polynomials associated with first-order formulas on values from some larger field, while preserving their properties on $A$ (which we identified with a subset of $GF(q)$). If we were to simply identify the numbers $0,\ldots,u$ with elements from $GF(q')$, for some prime $q'>q$, while leaving the definitions of the polynomials unchanged (with $q-1$ in the exponents), then Fermat's Little Theorem would no longer apply, and it would no longer be the case that the polynomials take values in $\{0,1\}$ when evaluated on tuples of values from $A$. In order to preserve this property without increasing the degree of the polynomials we work with, we choose the field $GF(q^h)$, for some reasonable value $h$, which contains $GF(q)$ as a subfield. The elements of $GF(q^h)$ can themselves be identified with $h$-tuples of elements of $GF(q)$, and, as such, arithmetic operations in $GF(q^h)$ can be computed efficiently, as long as $h$ is not too large. Note that for any $P\in GF(q^h)[X_1,\ldots,X_l]$, $\R X_iP$ coincides with $P$ on all tuples of values from $A$.
\\\\
With arithmetization generalized in this way, we are now in a position to construct an IP similar to the one used in \cite{shen} to show that $\mathbf{PSPACE}\subseteq\mathbf{IP}$, and prove the following:

\begin{theor}\label{awsat_thm}
For every problem $Q\in\mathbf{AW[SAT]}$, there is an interactive proof system deciding $Q$ such that, for some computable functions $f$ and $h$, and a polynomial $p$, on any input $(x,k)$, the verifier runs in time $f(k)p(|x|)$ and makes at most $h(k)$ random guesses, and the interaction has at most $h(k)$ rounds.
\end{theor}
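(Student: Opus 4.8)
The plan is to exploit the characterization $\mathbf{AW[SAT]}=[p\textit{-var}\textsc{-MC}(\textrm{PNF})]^{\textrm{fpt}}$ and to build a single proof system for $p\textit{-var}\textsc{-MC}(\textrm{PNF})$. Given $Q\in\mathbf{AW[SAT]}$, fix an fpt-reduction $\mathbb{R}$ from $Q$ to $p\textit{-var}\textsc{-MC}(\textrm{PNF})$; on input $(x,k)$ the verifier first computes $(\mathcal{A},\phi):=\mathbb{R}(x,k)$ in time $f_0(k)p_0(|x|)$, where $\phi=Q_1x_1\cdots Q_tx_t\,\psi$ is in prenex normal form with quantifier-free matrix $\psi$, the number of variables is $t\le h_0(k)$, and $\|\mathcal{A}\|+|\phi|\le f_0(k)p_0(|x|)$; it then runs an interactive protocol to decide whether $\mathcal{A}$ satisfies $\phi$. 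The structural fact that makes the theorem possible is that the quantifier depth $t$, hence the number of quantifier alternations, is bounded by a computable function of $k$ alone: this is exactly what keeps the number of rounds $k$-bounded.

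The protocol is the sumcheck-style protocol of \cite{shen}, transported to the present setting via the arithmetization of the previous subsection. Let $q$ be the prime associated with $\mathcal{A}$ (so $q<2|A|$), let $g:=P_{\mathcal{A},\psi}\in GF(q)[X_1,\dots,X_t]$ be the arithmetization of the matrix, and form, from $Q_1X_1\cdots Q_tX_t\,g$, the arithmetic expression obtained by interleaving degree-reduction operators $\R X_j$ with the quantifier operators exactly as in \cite{shen} (a block of such operators for the still-free variables after each quantifier, plus one block just above $g$). This expression has $m=O(t^2)$ operators, it evaluates over $GF(q)$ to $1$ iff $\mathcal{A}$ satisfies $\phi$ (degree-reduction operators leave the value unchanged, since the variables they touch are only ever evaluated at elements of $A$), and along it the degree of every polynomial in every variable is $\le q-1$ right after a degree-reduction operator and $\le\max(|\psi|,|A|)\cdot(q-1)=:D$ everywhere, so $D\le f_1(k)p_1(|x|)$. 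The verifier processes the operators from the outside in: at the operator currently being peeled, the prover sends a univariate polynomial of degree $\le D$ (as $\le D+1$ coefficients in $GF(q^h)$), the verifier checks the corresponding $\exists$-, $\forall$-, or $\R$-identity against the running claimed value and then re-randomizes the relevant variable by a uniformly chosen element of $GF(q^h)$; after the last operator the verifier evaluates $g$ at the accumulated point and compares it with the final claimed value. The working field $GF(q^h)$ is chosen with $h$ minimal so that $q^h\ge 3mD$; since $q\le 2f_0(k)p_0(|x|)$ and $D\le f_1(k)p_1(|x|)$ this yields $q^h\le f_2(k)p_2(|x|)$, so $GF(q^h)$ can be constructed (for instance by exhausting the $\le q^h$ monic degree-$h$ polynomials over $GF(q)$ to find an irreducible one) and its arithmetic carried out in time $\poly(f_2(k)p_2(|x|))$, and a uniform element of $GF(q^h)$, encoded as a single integer $<q^h$, can be produced by one random guess of the verifier.

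It remains to verify the promised bounds. \emph{Rounds and random guesses:} there are $m=O(t^2)\le O(h_0(k)^2)$ operators, hence $O(h_0(k)^2)$ rounds and one random guess per round, so both are bounded by a single computable function $h$ of $k$. \emph{Verifier time:} the reduction costs $f_0(k)p_0(|x|)$; producing $q$ and $GF(q^h)$ costs $\poly(f_2(k)p_2(|x|))$; each of the $m$ rounds costs $\poly(D,\log q^h)$; and the final evaluation of $g=P_{\mathcal{A},\psi}$ costs $\poly(|\psi|,\|\mathcal{A}\|)$ field operations, since the arithmetization of each atomic subformula is a sum over the corresponding relation, so only the plain list-of-tuples representation of $\mathcal{A}$ is needed (the constant-lookup representation of Remark \ref{rep_rem}, which would blow up with the arity, is not required here). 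All of this is $\le f(k)p(|x|)$ for suitable $f,p$. \emph{Completeness:} the honest prover makes every identity hold as a polynomial identity, so the verifier accepts with probability $1$. \emph{Soundness:} if $\mathcal{A}$ does not satisfy $\phi$ then the running claimed value is already wrong after the first operator, and at each subsequent operator a cheating prover must present a univariate polynomial distinct from the true one, which agrees with it at the freshly chosen random point with probability $\le D/q^h$; a union bound over the $m$ operators gives acceptance probability $\le mD/q^h\le 1/3$.

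The part I expect to be genuinely delicate is reconciling the two kinds of budget. The number of operators, and therefore of rounds and of random guesses, must stay bounded \emph{purely} in $k$, which forces the degree-reduction operators to be interleaved with the quantifier operators in the style of \cite{shen} (a single reduction block at the bottom would not suffice); at the same time the degree $D$ of the exchanged polynomials, the order $q^h$ of the working field, and the cost of evaluating $P_{\mathcal{A},\psi}$ must all remain of the form $f(k)p(|x|)$, and in particular $q^h$ has to stay small enough to fit in one register so that a field element still costs only one random guess. Checking that all of these constraints can be met at once --- essentially that the quantifier depth being a parameter controls every ``$k$-budget'' while the instance size absorbs everything else --- is the crux; the algebraic identities and the soundness estimate are then routine adaptations of the classical sumcheck argument.
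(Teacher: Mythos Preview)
Your proof is correct and follows essentially the same approach as the paper: both reduce to $p\textit{-var}\textsc{-MC}(\textrm{PNF})$, arithmetize the matrix, and run the Shen-style sumcheck with interleaved degree-reduction operators, with the number of variables (hence the parameter) controlling the number of rounds and random guesses. The only differences are cosmetic: you pick the extension degree $h$ adaptively so that $q^h\ge 3mD$ and find the irreducible polynomial by exhaustive search, whereas the paper fixes the extension degree to $4$ (absorbing the round count into the choice of $q$ via $f'(k)$) and uses Rabin's probabilistic algorithm to construct $GF(q^4)$.
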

\begin{proof}
Let $Q$ be a problem in $\mathbf{AW[SAT]}$. Then $Q\leq^{\textrm{fpt}}p\textrm{-}var\textsc{-MC}(\textrm{PNF})$ and hence there exists an algorithm $\mathbb{R}$, such that for some computable functions $f$ and $h$, and a polynomial function $p$, we have that:

-for any instance $(x,k)$, $\mathbb{R}(x,k)=:((\mathcal{A},\phi),k')\in p\textrm{-}var\textsc{-MC}(\textrm{PNF}) \Leftrightarrow (x,k)\in Q$;

-the algorithm $\mathbb{R}$ runs in time $f(k)p(|x|)$;

-$k'\leq h(k)$.\\
Note that, due to the restriction on the running time of $\mathbb{R}$, we have that $||\mathcal{A}||+|\phi|+k'\leq f(k)p(|x|)$.

Let $f'(k):=\max\{f(k),(h(k)^2+3h(k))/2\}$. Let $q$ be the smallest prime between $f'(k)p(|x|)+1$ and $ 2(f'(k)p(|x|)+1)$. We describe an interactive proof system with $O(h(k)^2)$ rounds, such that the verifier runs in time $O(poly(q))$, accepts with probability close to 1 if $(x,k)\in Q$, and rejects with probability at least $2/3$ if $(x,k)\notin Q$.

We may assume that $\phi=\exists x_1\forall x_2\ldots Qx_{h(k)}\psi(x_1,\ldots,x_{h(k)})$, where $\psi$ is quantifier-free and of size $\leq q$. Therefore, we may also assume that $P_{\mathcal{A},\psi}$ can be written as an algebraic expression of size at most $\leq O(q^2)$ (since every relation symbol in $\psi$ must be replaced with a polynomial which can be written using at most $O(||\mathcal{A}||)\leq O(q)$ symbols). It follows then that $P_{\mathcal{A},\psi}$ can be evaluated on tuples of values from $GF(q^4)$ in time polynomial in $q$.

Evidently $\phi$ is satisfiable if and only if $P_{\mathcal{A},\phi}=1$. Now that the problem of deciding the satisfiability of $\phi$ has been reduced to testing whether a polynomial identity holds, we can in principle use the same proof system as the one for the $\mathbf{PSPACE}$-complete problem $\textsc{TQBF}$ in \cite{shen}, except that the linearization operation from that proof is replaced by our degree reduction operation $\R$. For the sake of completeness we describe the proof system here.

Since the $\exists$ and $\forall$ operations set the variable they are applied for only to values in $GF(q)$, an arbitrary number of $\R$ operations can be added to the expression defining $P_{\mathcal{A},\phi}$ without changing the result. In other words, $P_{\mathcal{A},\phi}=1$ if and only if
\begin{equation}\label{Req}
\exists X_1 \R X_1 \forall X_2 \R X_1 \R X_2 \exists X_3 \ldots QX_{h(k)}\R X_1 \R X_2 \ldots \R X_{h(k)}P_{\mathcal{A},\psi}(X_1,\ldots,X_{h(k)})=1.
\end{equation}
We define a number of polynomials based on the expression on the left-hand side of (\ref{Req}), by successively removing operations on variables from left to right. Thus, each of the polynomials has non-zero degree in some variable $X_i$ if and only if in the expression defining the polynomial the single $\exists$/$\forall$ operation on that variable in (\ref{Req}) has been removed.
\begin{align*}
&P_0:=\exists X_1 \R X_1 \forall X_2 \R X_1 \R X_2 \exists X_3 \ldots QX_{h(k)}\R X_1 \R X_2 \ldots \R X_{h(k)}P_{\mathcal{A},\psi}(X_1,\ldots,X_{h(k)}).\\
&P_1(X_1):=\R X_1 \forall X_2 \R X_1 \R X_2 \exists X_3 \ldots QX_{h(k)}\R X_1 \R X_2 \ldots \R X_{h(k)}P_{\mathcal{A},\psi}(X_1,\ldots,X_{h(k)}).\\
&P_2(X_1):=\forall X_2 \R X_1 \R X_2 \exists X_3 \ldots QX_{h(k)}\R X_1 \R X_2 \ldots \R X_{h(k)}P_{\mathcal{A},\psi}(X_1,\ldots,X_{h(k)}).\\
&P_3(X_1,X_2):=\R X_1 \R X_2 \exists X_3 \ldots QX_{h(k)}\R X_1 \R X_2 \ldots \R X_{h(k)}P_{\mathcal{A},\psi}(X_1,\ldots,X_{h(k)}).\\
&\vdots\\
&P_{(h(k)^2+3h(k))/2-1}(X_1,\ldots,X_{h(k)}):=\R X_{h(k)}P_{\mathcal{A},\psi}(X_1,\ldots,X_{h(k)}).\\
&P_{(h(k)^2+3h(k))/2}(X_1,\ldots,X_{h(k)}):=P_{\mathcal{A},\psi}(X_1,\ldots,X_{h(k)}).
\end{align*}
For every $t\in\{1,\ldots,(h(k)^2+3h(k))/2\}$, $P_{t-1}(X_1,\ldots,X_i)=QX_jP_t(X_1,\ldots,X_j)$, where $Q\in\{\exists,\forall,\R\}$. If $Q$ is $\exists$ or $\forall$, then in the last identity $i=j-1$, otherwise $i=j$. Due to the interspersed $\R$ operations and the fact that the degree of $P_{\mathcal{A},\psi}(X_1,\ldots,X_{h(k)})$ in each variable is $\leq q^2$, we have that $P_t(X_1,\ldots,X_j)$ has degree at most $q^2$ in $X_j$.
\\\\
\emph{The proof system.}

Before the start of the interaction, the verifier runs a probabilistic algorithm for finding an irreducible polynomial of degree $4$ with coefficients in $GF(q)$, which enables him to perform computations in $GF(q^4)$ \cite{rab80}. He sends the coefficients of this polynomial to the prover.

The prover tries to convince the verifier that $P_0=1$. The verifier can not efficiently evaluate $P_0$ by himself, but can efficiently evaluate the last of the above polynomials on any tuple of values from $GF(q^4)$.

In round $t$ of the interaction, for $t\in\{1,\ldots,(h(k)^2+3h(k))/2\}$, the prover attempts to convince the verifier that $P_{t-1}(a_1,\ldots,a_i)=s_{t-1}$, where $a_1,\ldots,a_i,s_{t-1}\in GF(q^4)$ are values chosen at previous rounds (except $s_0=1$). To do this, he must send the coefficients of a polynomial $S(X_j)$ of degree $\leq q^2$, which is claimed to be $P_t(a_1,\ldots,a_{j-1},X_j)$. The verifier computes $QX_j S(X_j)$ in time polynomial in $q$. We have two cases:

1. If $Q$ is $\exists$ or $\forall$, and hence $i=j-1$, then $QX_j S(X_j)$ is a constant that must be equal to $s_{t-1}$. The verifier checks this and rejects if the equality does not hold.

2. If $Q=\R$, and hence $i=j$, then $QX_j S(X_j)$ is a polynomial in $X_j$. The verifier checks whether $(QX_j S(X_j))(a_i)=s_{t-1}$ and rejects if this is not the case.

In both cases the verifier then chooses uniformly at random an element $a\in GF(q^4)$, sends $a_j:=a$ to the prover and sets $s_t:=S(a)$. They proceed to the next round (the verifier delays his final decision until some later round, since he has found no reason to reject in this round).

If the interaction reaches round $t=(h(k)^2+3h(k))/2+1$, then the verifier has values $a_1,\ldots,a_{h(k)},s_{t-1}\in GF(q^4)$ chosen during previous rounds. He checks whether $P_{\mathcal{A},\psi}(a_1,\ldots,a_{h(k)})$\\$=s_{t-1}$ and accepts if this is the case, otherwise he rejects.
\\\\
\emph{Analysis of the proof system.}

The verifier can run the procedure for obtaining an irreducible polynomial some constant number of times, so that the probability of success is $>1-1/100$. If he fails to find a suitable polynomial, he rejects.

If $P_0=1$, and the verifier has not rejected in the beginning, then the prover can always cause the verifier to accept, by sending the correct polynomial each round. Assume then that $P_0=0$. Since the claim that $P_{t-1}(a_1,\ldots,a_i)=s_{t-1}$ is false in round $1$, while the only way for the verifier to accept is if this claim is true in the last round, there must be some $t\geq 1$ such that $P_{t-1}(a_1,\ldots,a_i)\neq s_{t-1}$, but $P_{t}(a_1,\ldots,a_{j-1},a)= s_{t}$. Thus, in round $t$, the prover must produce a polynomial $S(X_j)\neq P_{t}(a_1,\ldots,a_{j-1},X_j)$ such that $S(a)=P_{t}(a_1,\ldots,a_{j-1},a)$ for the randomly chosen $a\in GF(q^4)$. Note that $S(X_j)- P_{t}(a_1,\ldots,a_{j-1},X_j)$ has degree $\leq q^2$, and therefore has at most $q^2$ roots in $GF(q^4)$. Thus, the probability that $s_t=S(a)=P_{t}(a_1,\ldots,a_{j-1},a)$ holds for a randomly chosen $a\in GF(q^4)$, is at most $q^2/q^4$. By the union bound, the probability of this happening in at least one round is no greater than $\frac{((h(k)^2+3h(k))/2)\cdot q^2}{q^4}\leq \frac{q^3}{q^4}$. We may assume that $q\geq 4$, in which case this probability is $< 1/3$.
\end{proof}

The IP in Theorem \ref{awsat_thm} has both the number of rounds and the number of random guesses made by the verifier bounded computably in terms of the parameter, but the length of the prover's messages and of the verifier's computations between rounds are ``fpt-bounded''. In order for an $\mathbf{AW[*]}$-machine to simulate an interactive proof, it would presumably need to nondeterministically guess the prover's messages, as well as the random guesses made by the verifier, so the entire interaction would have to be simulated in the last $h(k)$ steps of the computation (due to tail-nondeterminism). In other words, the proof system would have to be such that the verifier only performs an fpt-bounded pre-computation, followed by an interaction that is entirely bounded in the parameter alone. We conjecture that the class of problems with such IPs, which we call $\mathbf{IP}^{tail}$, is precisely $\mathbf{AW[*]}$. The evidence for this conjecture is that when the size of the FO formula is bounded in terms of the parameter, it seems that the IP from Theorem \ref{awsat_thm} can be improved so that at least the length of the prover's messages depends only on the parameter, by using only symbols for the polynomials representing the atomic relations, rather than expanding them into algebraic expressions. Getting the same bound for the verifier's computations between rounds is more challenging.

\section{Conclusions}
We have shown that, with some degree of effort, certain classical methods can be put to use in the parameterized setting, although some theorems only partially transfer over. The fact that different aspects of the computation of a RAM are bounded differently, and that some computational resources can be tail-restricted, ensures that the machine-based theory of parameterized intractability is by no means just ``complexity theory with RAMs''.

One can now attempt to make some progress on the problem of separating matching levels of the $\mathbf{A}$- and the $\mathbf{W}$-Hierarchy, by proving oracle separations when reasonable restrictions are placed on the oracle access of the respective machines.

\subsection*{Acknowledgments}
The author is grateful to Yijia Chen and S\'andor Kisfaludi-Bak for helpful discussions, to Martin Bottesch, S\'andor Kisfaludi-Bak, and Ronald de Wolf for comments on a draft of this paper, and to two anonymous referees for helpful comments on the version submitted to IPEC 2017.

\end{document}